\providecommand{\U}[1]{\protect\rule{.1in}{.1in}}
\theoremstyle{plain}
\newtheorem{thm}{Theorem}
\theoremstyle{definition}
\theoremstyle{proposition}
\theoremstyle{lemma}
\newtheorem{lemma}{Lemma}
\theoremstyle{corollary}
\begin{document}
\title{On the Cauchy Problem for Weyl-Geometric Scalar-Tensor Theories of Gravity}
\author{R. Avalos$^{1}$, I. P. Lobo$^{2}$, T. Sanomiya$^{2}$ and C. Romero$^{2}$\\
${}^{2}\!$Departamento de F\'{\i}sica, Universidade Federal da Para\'{\i}ba, Caixa
Postal 5008, 58059-970 Jo\~{a}o Pessoa, PB, Brazil.\\
${}^{1}\!$Departamento de Matem\'atica - UFC, Bloco 914 – Campus do Pici, 60455-760 Fortaleza, Cear\'a,  Brazil.\\
E-mail: rodrigo.avalos@fisica.ufpb.br; iarley\_lobo@fisica.ufpb.br; sanomiya@fisica.ufpb.br; cromero@fisica.ufpb.br}

\begin{abstract}
In this paper, we analyse the well-posedness of the initial value formulation
for particular kinds of geometric scalar-tensor theories of gravity, which
are based on a Weyl integrable space-time. We will show that, within a
frame-invariant interpretation for the theory, the Cauchy problem in vacuum is
well-posed. We will analyse the global in space problem, and, furthermore, we
will show that geometric uniqueness holds for the solutions. We make contact
with Brans-Dicke theory, and by analysing the similarities with such models,
we highlight how some of our results can be translated to this well-known
context, where not all of these problems have been previously addressed.

\end{abstract}
\maketitle

\section{Introduction}

In this paper, we will analyse the initial value formulation for special types
of geometric scalar-tensor theories, which are based on a Weyl integrable
structure as a model for space-time. This type of structures have bee proposed
as a suitable model for space-time by several authors \cite{Romero1}%
,\cite{Salim-Poulis},\cite{Romero2},\cite{CBPF},\citep{Scholz1}. In fact, it
is quite interesting that Weyl manifolds appear quite naturally as suitable
models for space-time within the well-known axiomatic approach to space-time
put forward by Elhers, Pirani and Schild (EPS) in \cite{EPS}. There, it is
shown that starting from basic assumptions about the behaviour of light and
freely falling particles, space-time naturally acquires the structure of a
general Weyl manifold. Then, it is shown that by adding some hypothesis about
the behaviour of clocks, the Weyl structure should be integrable,
\textit{i.e}, Weyl's 1-form must be exact. A slightly different approach is
followed in \cite{ADR}, where a rigorous discussion of the second-clock effect
in the context of general Weyl structures is used to show that, without
invoking any additional axioms, the rejection of such an effect as a realistic
physical phenomenon, reduces the scenario to a Weyl integrable structure.

The ideas presented above, show that Weyl structures appear naturally as
interesting models for space-time. In this context such structures have been
explored by some cosmologists and, in particular, in \cite{Romero1} it has
been pointed out that Weyl geometry might provide a natural geometric
framework for some scalar-tensor theories. In such a context, the scalar field
would appear as part of the geometric structure of space-time. For instance,
in \cite{Romero1} it has been shown that applying a Palatini-type variational
principle to the Brans-Dicke action, Weyl's compatibility condition naturally
appears as the link between the metric, the scalar field and the connection.
Thus, it has been proposed that such action should be posed on a Weyl
integrable structure, and that in order to make the theory frame independent,
it is argued that only geometrical quantities invariant under Weyl
transformations should be regarded as containing physical information. This
view may be supported by the fact that, following EPS, freely falling
particles are shown to follow geodesics of the Weyl structure, which are
invariant under Weyl transformations. Thus, each frame lies in equal footing
when describing the motion of such particles. This point of view has been
adopted in the study of some cosmological models \cite{Romero3}%
,\citep{Romero4}. Also, a recent review on the use of such Weyl structures in
modern physics can be found in \cite{Scholz2}.

Another important issue that has been raised, related with the above
paragraph, is that the kind of model proposed in \cite{Romero1} is intimately
related with Brans-Dicke theory. In fact, using different representatives in
the equivalence class of Weyl integrable manifolds, the field equations take
the same form as in the Brans-Dicke theory in either the Jordan frame or the
Einstein frame, but, within the invariant interpretation presented in
\cite{Romero1}, both frames would be equivalent. (For a discussion on this matter, see \cite{Iarley}).

With the above considerations in mind, we will analyse whether the initial
value formulation for the the theory proposed in \cite{Romero1} is well-posed,
within the invariant formulation presented therein. Since the initial value
formulation of any physical theory is one of the things which lie in the core
of its interpretation, giving it predictability, we regard this issue as
fundamental if the theory is to be taken as \textit{realistic}.

This paper will be organized as follows. We will first review the type of
models we will be analysing and remind the reader the basic definitions
concerning Weyl manifolds. Then, we will address the initial value formulation
of such theories, and, finally, we will conclude with some remarks and future perspectives.

\section{Weyl-geometric scalar-tensor theories (WGSTT)}

\subsection*{Weyl geometry}

A Weyl manifold is a triple $(M,g,\varphi)$, where $M$ is a differentiable
manifold, $g$ a semi-Riemannian metric on $M$, and $\varphi$ a 1-form field on
$M$. We assume that $M$ is endowed with a torsion-free linear connection
$\nabla$ satisfying the following \textit{compatibility condition}:
\begin{align}
\label{compatibility}\nabla g=\varphi\otimes g.
\end{align}
where, $\nabla g$ denotes the $(0,3)$-tensor field defined by $\nabla
g(X,Y,Z)\doteq(\nabla_{X}g)(Y,Z)$, where $X,Y$ and $Z$ are vector fields.
Results concerning the existence and uniqueness of such a connection are
straightforward and their proofs are analogous to those known for the
Riemannian case \cite{Folland}. It can easily be shown that, in local
coordinates, the components of the Weyl connection $\nabla$ are given by:
\[
\Gamma_{ac}^{u}=\frac{1}{2}g^{bu}(\partial_{a}g_{bc}+\partial_{c}%
g_{ab}-\partial_{b}g_{ca})+\frac{1}{2}g^{bu}(\varphi_{b}g_{ca}-\varphi
_{a}g_{bc}-\varphi_{c}g_{ab}).
\]

A \textbf{Weyl integrable manifold} is defined to be a Weyl manifold where
Weyl's 1-form $\varphi$ is exact, that is, $\varphi=d\phi$ for some (smooth)
scalar function $\phi$ on $M$.

Let us now note that the compatibility condition (\ref{compatibility}) is
invariant under the following group of transformations:
\begin{equation}
\left\{
\begin{array}
[c]{ll}%
\overline{g}=e^{-f}g & \\
\overline{\varphi}=\varphi-df &
\end{array}
\right.  \label{weyl group}%
\end{equation}
where $f$ is an arbitrary smooth function defined on $M$. By this, we mean
that if $\nabla$ is compatible with $(M,g,\varphi),$ then it is also
compatible with $(M,\overline{g},\overline{\varphi})$. It is easy to check
that these transformations define an equivalence relation between Weyl
manifolds. In this way, every member of the class is compatible with the same
connection, hence has the same geodesics, curvature tensor and any other
property that depends only on the connection. This is the reason why it is
regarded more natural, when dealing with Weyl manifolds, to consider the whole
class of equivalence $(M,[g,\varphi])$ rather than working with a particular
element of this class. In this sense, it is argued that only geometrical
quantities that are invariant under (\ref{weyl group}) are of real
significance in the case of Weyl geometry. Following the same line of
argument, it is also assumed that only physical theories and physical
quantities presenting this kind of invariance should be considered of interest
in this context.

An interesting fact to note about integrable Weyl structures, is that within
each equivalence class $(M,[g,\phi])$, there is a representative for which its
Weyl compatible connection coincides with the Riemannian connection related to
this element. This is clear, since if we consider the element $(M,g^{\prime
},\phi^{\prime})$ given by $g^{\prime}=e^{-\phi}g$ and $\phi^{\prime}=0$, we see
that the above argument holds.

\subsection*{WGSTT}

In this section we will shortly review the main ideas about the models we will
be considering in the remainder of the paper. The starting point of the model
proposed in \cite{Romero1}, is an action functional of the following type
\begin{align}
\label{action1}S(\nabla,g,\phi)=\int_{V}e^{-\phi}(R(\nabla)+\omega\langle
d\phi,d\phi\rangle)\mu_{g},
\end{align}
where $V$ stands for a 4-dimensional space-time, $g$ for a Lorentzian metric,
$\phi$ for a scalar field, $\omega$ is a coupling constant and $\mu_{g}$ the
volume form associated with $g$. In this context the Ricci scalar $R$ is
associated to a torsionless connection $\nabla$.

It should be remarked that, if we consider that the background geometry is
Riemannian, then (\ref{action1}) is equivalent to the Brans-Dicke action. The
idea adopted in \cite{Romero1} is to consider a Palatini-type action principle
for (\ref{action1}), and thus determine the relation between the connection
and the other geometric objects (in this context $g$ and $\phi$) from the
field equations. Interestingly enough, this procedure leads to the Weyl
compatibility condition between the connection, the metric and the scalar
field. Thus, we are led to a theory based on a Weyl integrable space-time.
Written as in (\ref{action1}), this theory is posed in vacuum, but a
prescription for coupling other fields is given \cite{Romero1}.

It should be noted that, once it is established that the physical theory which
comes from (\ref{action1}) should be posed on a Weyl integrable manifold, it
is straightforward to try pose it on a Weyl integrable structure. This is a
consequence of the natural invariance of the connection under Weyl
transformations. Furthermore, following for instance the EPS axiomatic
approach, we conclude that such structures appear as the most general
mathematical models for space-time within the original scope proposed in
\cite{EPS}. In this scenario, freely falling point-like massive particles
follow geodesics of the Weyl connection. This is a consequence of taking the
equivalence principle as one of the axioms for space-time. Thus, the
trajectories of massive particles will be invariant under Weyl
transformations. These ideas point towards adopting the view that physical
properties, in this context, should be invariant under Weyl transformations,
and thus that the theory described by (\ref{action1}) should be posed on the
equivalence class $(V,[g,\phi])$.

Taking into account the remarks above, we should stress that the second term in (\ref{action1}) is not invariant under Weyl transformations. Thus, we must expect the field equations derived from such action to be \textit{frame}-dependent. Therefore, from (\ref{action1}) we will get a set of frame-dependent field equations, whose solutions define equivalence classes of the form $(V,[g,\phi])$, and, from the physical point of view, any member in the equivalence class is on equal footing, and thus any physical property described by such a solution must be invariant under Weyl transformations. This point of view has been used to describe some cosmological scenarios in \cite{Romero3},\cite{Romero4}.

Finally, we should stress that, as might be expected, there is a strong
connection between the theory described above and Brans-Dicke theory. In fact,
another interesting aspect of this geometric model, is that the frame
transformations connected with the so called Einstein and Jordan frames in
Brans-Dicke theory, can be understood as Weyl transformation within this
context \cite{Iarley}.

The issue of the equivalence of frames in Brans-Dicke theory has been treated
(from the classical point of view) in some previous works
\cite{Faraoni:2006fx,Quiros:2011wb}. In particular, \cite{Faraoni:2006fx}
recaptures the arguments of Dicke's original paper \cite{Dicke:1961gz}, in
which the notion of running units is addressed and applied to analyse the
equivalence between frames. In this line of argument, it is claimed that any
measurement of a physical observable is performed by considering the ratio
between such a quantity, derived in a certain frame, and some chosen unit.
Being that this ratio is what is actually observed, it is this quantity which should
be invariant under frame transformations. Therefore, although the field
equations are not invariant, the measured physical observable would be.

Contrasting with the above interpretation, the kind of models proposed in
\cite{Romero1} also consider that different frames should be treated on equal
footing, but this idea is translated into the mathematical structure on which
the models are based. This is accomplished by considering space-time, and
physical phenomena, described by an equivalence class $(V,[g,\phi])$ defining
a Weyl integrable structure, and proposing that physical observables should be
invariant under Weyl transformations. It is worth noticing that a simple way
of defining invariant quantities in this formulation, consists in using the
invariant metric $\gamma=e^{-\phi}g$ to perform contractions on invariant tensors.

%For instance, since the curvature tensor is invariant under Weyl transformation, the curvature invariants should be constructed with the metric $\gamma$ in order to properly treat $(V,[g,\phi])$ as an equivalence class.

%\bigskip
To conclude this section, we will write down the $d$-dimensional version of
the action functional (\ref{action1}) and the field equations we get from it.
We will pose the problem on a Weyl integrable manifold from the beginning.
Thus, the curvature terms are to be understood as computed with the Weyl
connection. For quantities computed with a Riemannian connection we will take
as a notational convention to put the symbol `${}^{\circ}$' on the upper left
corner of such quantities.

Our $d$-dimensional action is the following:
\begin{align}
\label{action}S(g,\phi)=\int_{V}e^{-\frac{d-2}{2}\phi}(R+\omega\langle
d\phi,d\phi\rangle-V(\phi))\mu_{g},
\end{align}
where, in contrast to (\ref{action1}), we have included a potential for the
scalar field in order to take into account some cosmological scenarios, such
as the ones discussed in \cite{Romero4}. Straightforward computations show
that the Euler-Lagrange equations for such an action are the following:
\begin{align}
\label{fieldeqs1}%
\begin{split}
&  G_{\alpha\beta}=-\omega\nabla_{\alpha}\phi\nabla_{\beta}\phi+\frac{1}%
{2}g_{\alpha\beta}(\omega\nabla_{\sigma}\phi\nabla^{\sigma}\phi-V(\phi))\\
&  2\omega\nabla_{\sigma}\nabla^{\sigma}\phi+\frac{d+2}{2}\omega\nabla
_{\sigma}\phi\nabla^{\sigma}\phi+\frac{d-2}{2}R+V^{\prime}(\phi)-\frac{d-2}%
{2}V(\phi)=0.
\end{split}
\end{align}

These equations can be rewritten in a more convenient way by taking the trace
of the first equations and using this information in the second one. After a
few computations we get the following:
\begin{align}
\label{fieldeqs2}%
\begin{split}
&  G_{\alpha\beta}=-\omega\nabla_{\alpha}\phi\nabla_{\beta}\phi+\frac{1}%
{2}g_{\alpha\beta}(\omega\nabla_{\sigma}\phi\nabla^{\sigma}\phi-V(\phi)),\\
&  g^{\sigma\mu}\nabla_{\sigma}\nabla_{\mu}\phi+\frac{1}{2\omega}(V^{\prime
}(\phi)+V(\phi))=0.
\end{split}
\end{align}

Recall that the point of view that we are taking is that a solution of these
field equations defines an equivalence class $(V,[g,\phi])$, where each
element is related to another by a Weyl transformation, and that there is no
preferred element in the class to describe physical properties. Thus, it will
be instructive to look at the field equations satisfied by a generic element
in the class $(V,[g,\phi])$ defined by a solution of (\ref{fieldeqs2}). Since
any element $(V,g^{\prime},\phi^{\prime})$ in the equivalence class of
$(V,g,\phi)$ is of the form $(V,e^{-f}g,\phi-f)$ for some $f\in C^{\infty}%
(V)$, then $(g,\phi)=(e^{f}g^{\prime},\phi^{\prime}+f)$. Replacing these
expressions in (\ref{fieldeqs2}), we see that $(g^{\prime},\phi^{\prime})$
satisfy the following system:%

\begin{align}
\label{transformedsystem}%
\begin{split}
&  G_{\alpha\beta}(g^{\prime},\phi^{\prime})=-\omega\nabla_{\alpha}%
(\phi^{\prime}+f)\nabla_{\beta}(\phi^{\prime}+f)+\frac{\omega}{2}g^{\prime
}_{\alpha\beta}g^{\prime\mu\nu}\nabla_{\mu}(\phi^{\prime}+f)\nabla_{\nu}%
(\phi^{\prime}+f)-\frac{1}{2}e^{f}g^{\prime}_{\alpha\beta}V(\phi^{\prime
}+f),\\
&  g^{\prime\mu\nu}\nabla_{\mu}\nabla_{\nu}(\phi^{\prime}+f)+\frac{e^{f}%
}{2\omega}(V^{\prime}(\phi^{\prime}+f)+V(\phi^{\prime}+f))=0.
\end{split}
\end{align}
Where we have used the fact that the Einstein tensor is invariant under Weyl
transformations, thus $G_{\alpha\beta}(e^{f}g^{\prime},\phi^{\prime
}+f)=G_{\alpha\beta}(g^{\prime},\phi^{\prime})$ $\forall$ $f$. Also, note that
$V^{\prime}$ stand for the ordinary derivative of the function $V$, it does
not represent any transformation law for the potential induced by the Weyl transformations.

\section{The initial value formulation for WGSTT}

In this section we will analyse the initial value formulation for the theory
described in the previous section. As usual, we will consider space-time to be
globally hyperbolic. Thus, $V=M\times\mathbb{R}$, where $V$ is an
$(n+1)$-dimensional space-time and $M$ an $n$-dimensional hypersurface. On
this setting we have the standard $(n+1)$-splitting for space-time. By this we
mean that we can pick local adapted frames of the form
\begin{align}
\label{adaptedframe}%
\begin{split}
e_{0}  &  \doteq\partial_{t}-\beta,\\
e_{i}  &  \doteq\partial_{i}, \;\; i=1,\cdots,n,
\end{split}
\end{align}
where $\partial_{i}$ denote the coordinate vector fields associated to the
coordinates $x^{i}$ of any adapted coordinate system to the product structure
$(x,t)$. In this context, the vector field $\beta$ is the standard
\textit{shift vector}, which merely represents the tangent component of
$\partial_{t}$ to each hypersurface $M\times\{t\}\cong M$. Thus, $e_{0}$ is,
by construction, orthogonal to each $M\times\{t\}$. It should be noted that
since all the metrics in an equivalence class $[g,\phi]$ are conformally
related, they produce the same orthogonal splitting on each tangent space, and
thus $\beta$ does not depend on the choice of representative in the class.

It is straightforward to see that the dual frame to (\ref{adaptedframe}) is
given by
\begin{align}
\label{adaptedcoframe}%
\begin{split}
\theta^{0}  &  = dt\\
\theta^{i}  &  = dx^{i}-\beta^{i}dt, \;\; i=1,\cdots,n.
\end{split}
\end{align}
Using these adapted frames, we see that we can write any space-time metric in
the conformal class in the following way:
\begin{align}
g=-N^{2}\theta^{0}\otimes\theta^{0} + g_{ij}\theta^{i}\otimes\theta^{j},
\end{align}
where the function $N>0$ is referred to as the \textit{lapse function}, which
is frame-dependent. From now on, when we write coordinate expressions, unless
explicitly stated, we will be referring to coordinates with respect to adapted
frames of the type described above.

\bigskip In what follows, our strategy will be to first analyse the
well-posedness of the initial value formulation in a frame, that is, to
analyse the well-posednees of the set of equations (\ref{fieldeqs2}), and then
to show that this is enough to guarantee the well-posedness for the class
$(V,[g,\phi])$. We will now be more specific on what we mean by saying that
the Cauchy problem is well-posed on the equivalence class.

Since we are regarding that every meaningful quantity, which can in fact be
observed, must be invariant under Weyl transformation, then, certainly, the
initial value formulation should satisfy this condition if this interpretation
is to hold. By this we mean that if we have an initial data set for the
evolution problem, which has a development into a space-time $(V,g,\phi)$,
then any other equivalent space-time $(V,g^{\prime},\phi^{\prime})$ related to
$(V,g,\phi)$ by a Weyl transformation, must also be the product of the
evolution of initial data on $M$, and, in a specific sense, both initial data
sets should be equivalent. Similarly, given two \textit{equivalent} initial
data sets, if one of them has a development into a space-time satisfying
(\ref{fieldeqs2}), then the other must also have a development into an
equivalent space-time, thus satisfying (\ref{transformedsystem}) for some
gauge function $f$. Pictorially, what we intend to show is that the following
commutative diagram holds.

\begin{figure}[th]
\centering
\includegraphics[width=100mm]{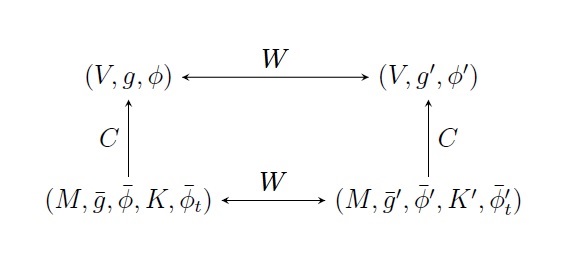}
\caption{
Commutativity of performing a Weyl transformation on the initial data sets and their respective space-time evolutions}
%\label{overflow}
\end{figure}

In the above diagram $C$ stands for \textit{Cauchy evolution} and $W$ for Weyl
transformation; the quantities $\bar{g},\bar{\phi}$ define a Weyl structure on
the initial hypersurface; $K$ stands for a symmetric $(0,2)$-tensor field
that, after getting the embedding, represents the extrinsic curvature of the
embedded hypersurface $M\cong M\times\{0\}$, and $\bar{\phi}_{t}$ represents
the initial data for $\partial_{t}\phi(\cdot,0)$.

We should now make sense of what we mean by a Weyl transformation on the
initial data set. This is because, even though it is clear how a Weyl
transformation has to act on $\bar{g}$ and $\bar{\phi}$, the objects $K$ and
$\bar{\phi}_{t}$ are extrinsic objects, and the way they should transform
comes from looking at the embedded manifold $M\hookrightarrow M\times
\mathbb{R}$ and analysing their transformation behaviour. Thus, consider a
space-time Weyl transformation:
\begin{align}\label{spweyltrans}
g^{\prime}=e^{-f}g,\\
\phi^{\prime}=\phi-f.
\end{align}
Using the definition of the extrinsic curvature \cite{Oneill}, we have that
\begin{align*}
K^{\prime}(X,Y)\overset{def}{=}g^{\prime}(\nabla_{X}Y, n^{\prime}),
\end{align*}
where $n^{\prime}$ stands for the future-pointing unit normal vector field, in
the metric $g^{\prime}$, to each slice $M\times\{t\}$. It is clear that using
the adapted frames, we have that ${N^{\prime}}^{2}=e^{-f}N^{2}$, where $N$ and
$N^{\prime}$ are the lapse functions in the frames $(V,g,\phi)$ and
$(V,g^{\prime},\phi^{\prime})$. Then, we have that, by definition of the
lapse,
\begin{align*}
n^{\prime}=\frac{1}{N^{\prime}}e_{0}=e^{\frac{f}{2}}n.
\end{align*}
Then, clearly, we have that
\begin{align}\label{extcurvtrans}
K^{\prime}=e^{-\frac{f}{2}}K(X,Y)
\end{align}
That is, under a space-time Weyl transformation, the extrinsic curvature of
each hypersurface $M\times\{t\}$ transforms as $K^{\prime}=e^{-\frac{f}{2}}K$.
Thus, on the initial slice $M\cong M\times\{0\}$, setting $h(\cdot)\doteq
f(\cdot,0)\in C^{\infty}(M)$, we have the following induced Weyl
transformation on $M$
\begin{align*}
K^{\prime}|_{t=0}=e^{-\frac{h}{2}}K|_{t=0}.
\end{align*}
Also, with the same line of argument, we have that $\partial_{t}\phi^{\prime
}(\cdot,0)=\partial_{t}\phi(\cdot,0)-\partial_{t}f(\cdot,0)$. Thus, since
$\eta(\cdot)\doteq\partial_{t}f(\cdot,0)$ represents an arbitrary smooth
function on $M$, we get the following transformations on the initial data set
\begin{align}
\label{constrainttransf}%
\begin{split}
\bar{g}^{\prime}=e^{-h}\bar{g}, \;\;\;\;  &  ; \;\;\;\; \bar{\phi}^{\prime}%
=\bar{\phi}-h,\\
\bar{K}^{\prime}=e^{-\frac{h}{2}}\bar{K} \;\;\;\;  &  ; \;\;\;\; \bar{\phi}%
_{t}=\bar{\phi}_{t}-\eta.
\end{split}
\end{align}
Thus, we will take as a definition that a Weyl transformation on the initial
data set $(\bar{g},\bar{\phi},\bar{K},\bar{\phi}_{t})$ is given by
(\ref{constrainttransf}), where $h$ and $\eta$ are arbitrary smooth functions
on $M$.

\bigskip

It will be important to remark that, from the way $K$ transforms under a
space-time Weyl transformation, we can easily compute its coordinate
expression in an adapted frame. In order to do this, we simply use the fact
that for the Riemannian element $(V,e^{-\phi}g,0)$ in an equivalence class
$(V,[g,\phi])$, we know what the coordinate expression of $K^{\prime}$ looks
like, since this is a standard object in Riemannian geometry. Explicitly,
\begin{align*}
K^{\prime}_{ij}=-\frac{1}{2N^{\prime}}(\partial_{t}g^{\prime}_{ij}-(\nabla
_{i}\beta_{j}+\nabla_{j}\beta_{i})).
\end{align*}
Since we have found that $K(g,\phi)=e^{\frac{\phi}{2}}K^{\prime}(g^{\prime
},0)$, and we know that $N^{\prime-\frac{\phi}{2}}N$, we get that
\begin{align*}
K_{ij}(g,\phi)  &  =-\frac{e^{\phi}}{2N}(e^{-\phi}\partial_{t}g_{ij}-e^{-\phi
}g_{ij}\partial_{t}\phi-(\nabla_{i}\beta_{j}+\nabla_{j}\beta_{i})),\\
&  =-\frac{1}{2N}(\partial_{t}g_{ij}-g_{ij}\partial_{t}\phi-e^{\phi}%
(\nabla_{i}\beta_{j}+\nabla_{j}\beta_{i})).
\end{align*}
Thus, on the initial slice $M\cong M\times\{0\}$, we get
\begin{align}
\label{extcurv}\bar{K}_{ij}(g,\phi)=-\frac{1}{2\bar{N}}(\partial_{t}%
g_{ij}|_{t=0}-\bar{g}_{ij}\bar{\phi}_{t}-e^{\bar{\phi}}(\bar{\nabla}_{i}%
\bar{\beta}_{j}+\bar{\nabla}_{j}\bar{\beta}_{i})),
\end{align}
where all the quantities with a ``bar" on top are taken on $t=0$.

\subsection*{Constraint equations}

Just as in the case of general relativity (GR), the set of equations (\ref{fieldeqs2}) impose a set
of constraint equations for the initial data. Before writing these equations,
we will first stablish some notation. The field equation can be rewritten as
\begin{align}
&  G_{\alpha\beta}=T_{\alpha\beta},\\
&  g^{\alpha\beta}\nabla_{\alpha}\nabla_{\beta}\phi+\frac{1}{2\omega
}(V^{\prime}(\phi)+V(\phi))=0,
\end{align}
where
\begin{align}
\label{energy-momentum}T_{\alpha\beta}\doteq-\omega\nabla_{\alpha}\phi
\nabla_{\beta}\phi+\frac{1}{2}g_{\alpha\beta}(\omega\nabla_{\sigma}\phi
\nabla^{\sigma}\phi-V(\phi)).
\end{align}

The origin of the constraint equations is quite simple: If the above equations
equations are to hold on space-time, they must, in particular, hold on the
initial hypersurface $M\cong M\times\{0\}$. Thus, the equations
\begin{align}
\label{constraints6.0}%
\begin{split}
G(e_{0},e_{0})  &  =T(e_{0},e_{0}),\\
G(e_{0},e_{i})  &  =T(e_{0},e_{i}).
\end{split}
\end{align}
must be satisfied on the initial space-like slice. We will see that these
equation depend only on the initial data set. This implies that the initial
data set cannot be chosen arbitrarily. Only those initial data sets satisfying
the constraint equations (\ref{constraints6.0}) can have a development into a
space-time satisfying the above field equations.

We will now explicitly write down the constraint equations. In order to
compute the components of the Einstein tensor, we can take advantage of the
fact that we know their form in the Riemannian context, and that within each
class $(V,[g,\phi])$ there is a Riemannian element $(V,e^{-\phi}g,0)$,
\textit{i.e.}, for this element the Weyl connection coincides with its
Riemannian connection. Thus, since we know that the Einstein tensor is
invariant in the equivalence class, then, using the usual Gauss-Codazzi
equations from Riemannian geometry, we see that for
\begin{align*}
g^{\prime}=e^{-\phi}g,\\
K^{\prime}=e^{-\frac{\phi}{2}}K,
\end{align*}
the following relations hold:
\begin{align*}
G(e_{0},e_{0})  &  = \frac{{N^{\prime}}^{2}}{2}(({K^{\prime}}^{l}_{l}%
)^{2}-K^{\prime ij}K^{\prime}_{ij}+R(\bar{g}^{\prime})),\\
G(e_{0},e_{i})  &  =N^{\prime}(\bar{\nabla}_{i}{K^{\prime}}^{l}_{l}%
-\bar{\nabla}_{j}K^{\prime j}_{i}),
\end{align*}
where $\bar{\nabla}$ denotes the induced Weyl connection on $M$, compatible
with the induced Weyl structure $(M,[\bar{g},\bar{\phi}])$. Using
(\ref{constrainttransf}), we get
\begin{align}%
\begin{split}
G(e_{0},e_{0})  &  = \frac{{N}^{2}}{2}((K^{l}_{l})^{2}-K^{ij}K_{ij}+R(\bar
{g},\bar{\phi})),\\
G(e_{0},e_{i})  &  =N\bar{g}^{uj}(\bar{\nabla}_{i}K_{uj}-\bar{\nabla}%
_{j}K_{ui})+\frac{N}{2}(\bar{\nabla}_{j}\phi K^{j}_{i}-\bar{\nabla}_{i}\phi
K^{l}_{l}).
\end{split}
\end{align}

Now, in order to write down the constraints, we need to compute the same
components for the tensor $T$. Notice that, using an adapted frame, we have
\begin{align*}
T_{g,\phi}=\{-\omega e_{\alpha}(\phi)e_{\beta}(\phi)+\frac{1}{2}\omega
g_{\alpha\beta}g^{\mu\nu}e_{\mu}(\phi)e_{\nu}(\phi)-\frac{1}{2}g_{\alpha\beta
}V(\phi)\}\theta^{\alpha}\otimes\theta^{\beta}.
\end{align*}
From this expression it is straightforward to see that
\begin{align}%
\begin{split}
T(e_{0},e_{0})  &  =\frac{N^{2}}{2}\{-\omega(\pi^{2}+|\bar{\nabla}\phi
|^{2}_{\bar{g}})+V(\phi)\}\\
T(e_{0},e_{i})  &  =-\omega N\pi\bar{\nabla}_{i}\phi,
\end{split}
\end{align}
where $\pi\doteq\frac{1}{N}e_{0}(\phi)$, $\bar{g}$ denotes the induced metric
on each slice $M\times\{t\}$ and $\bar{\nabla}$ denotes the induced Weyl
connection. From these computations we get that the constraint equations are
the following
\begin{align}
\label{wistconstraints}%
\begin{split}
(K^{l}_{l})^{2}-K^{ij}K_{ij}+R  &  =-\omega\{\pi^{2}+|\bar{\nabla}\phi
|^{2}_{\bar{g}}\}+V(\phi),\\
\bar{g}^{uj}(\bar{\nabla}_{i}K_{uj}-\bar{\nabla}_{j}K_{ui})+\frac{1}{2}%
(\bar{\nabla}_{j}\phi K^{j}_{i}-\bar{\nabla}_{i}\phi K^{l}_{l})  &
=-\omega\pi\bar{\nabla}_{i}\phi.
\end{split}
\end{align}
The above expressions, on $M\cong M\times\{0\}$, depend only on the initial
data set, as previously remarked. Since these constraints are to be posed for
the initial data on the initial hypersurface $M$, it becomes clear that, just
as discussed for $K$, the way that $\pi$ transforms under a Weyl
transformation on the initial data set, does not come from the intrinsic Weyl
transformation on $M$, but is induced from the way it transforms under
space-time Weyl transformations. Also note that $\pi$ is a quantity that we
get by analysing the right-hand side of (\ref{fieldeqs2}), which is
frame-dependent. Thus, we should expect $\pi$ to be frame-dependent. In fact,
the above computations applied to the transformed system
(\ref{transformedsystem}), give us the following constraints, valid in each
frame:
\begin{align}%
\begin{split}
&  ({K^{*}}^{l}_{l})^{2}-{K^{*}}^{ij}{K^{*}}_{ij}+R(\bar{g}^{*},\bar{\phi}%
^{*})=-\omega\Big\{\big(\frac{e^{\frac{f(\cdot,0)}{2}}}{N}e_{0}(\bar{\phi}%
^{*}+f)\big)|_{t=0}^{2}+|\bar{\nabla} (\bar{\phi}^{*}+f(\cdot,0))|^{2}%
_{\bar{g}^{*}}\Big\}\\
&
\;\;\;\;\;\;\;\;\;\;\;\;\;\;\;\;\;\;\;\;\;\;\;\;\;\;\;\;\;\;\;\;\;\;\;\;\;\;\;\;\;\;\;\;\;\;\;\;\;\;\;\;\;\;\;\;\;\;\;\;\;\;\;\;\;\;\;\;\;\;\;\;\;\;\;\;\;\;\;\;\;\;\;\;\;\;\;\;\;\;\;\;\;\;\;\;\;\;\;\;\;\;\;\;\;
+e^{f(\cdot,0)}V(\phi^{*}+f)|_{t=0},\\
&  {\bar{g^{*}}}^{uj}(\bar{\nabla}_{i}{K^{*}}_{uj}-\bar{\nabla}_{j}{K^{*}%
}_{ui})+\frac{1}{2}(\bar{\nabla}_{j}\bar{\phi}^{*} {K^{*}}^{j}_{i}-\bar
{\nabla}_{i}{\bar{\phi}^{*}} {K^{*}}^{l}_{l})=-\omega\big(\frac{e^{\frac
{f(\cdot,0)}{2}}}{N}e_{0}(\bar{\phi}^{*}+f)\big)|_{t=0}\bar{\nabla}_{i}%
(\bar{\phi}^{*}+f(\cdot,0))\big).
\end{split}
\end{align}
where the notation comes from considering the space-time Weyl transformation
\begin{align*}
g^{*}  &  =e^{-f}g,\\
\phi^{*}  &  =\phi-f.
\end{align*}
From the above expression, we finally get that in an arbitrary frame,
\begin{align*}
\pi^{*}  &  \doteq\frac{e^{\frac{f}{2}}}{N}e_{0}(\phi^{*}+f)=\frac{e^{\frac
{f}{2}}}{N}e_{0}(\phi-f+f)\\
&  =e^{\frac{f}{2}}\pi.
\end{align*}
Thus, we can write the constraints of the transformed system
(\ref{transformedsystem}) as
\begin{align}
\label{transformedconstraints}%
\begin{split}
&  ({K^{*}}^{l}_{l})^{2}-{K^{*}}^{ij}{K^{*}}_{ij}+R(\bar{g}^{*},\bar{\phi}%
^{*})=-\omega\Big\{{\pi^{*}}^{2}+|\bar{\nabla} (\bar{\phi}^{*}+f(\cdot
,0))|^{2}_{\bar{g}^{*}}\Big\}+e^{f(\cdot,0)}V(\phi^{*}+f)|_{t=0},\\
&  {\bar{g^{*}}}^{uj}(\bar{\nabla}_{i}{K^{*}}_{uj}-\bar{\nabla}_{j}{K^{*}%
}_{ui})+\frac{1}{2}(\bar{\nabla}_{j}\bar{\phi}^{*} {K^{*}}^{j}_{i}-\bar
{\nabla}_{i}{\bar{\phi}^{*}} {K^{*}}^{l}_{l})=-\omega\pi^{*}\bar{\nabla}%
_{i}(\bar{\phi}^{*}+f(\cdot,0)).
\end{split}
\end{align}
Now, just as instead of $\partial_{t}\bar{g}|_{t=0}$, we actually consider $K$
as part of the geometric initial data, we will consider $\pi$ instead of
$\bar{\phi}_{t}$ as part of the geometric initial data. In this way, when
given the initial data set $(\bar{g},\bar{\phi},K,\pi)$, we will have two
\textit{gauge} choices: one for the lapse-shift and another one for the values
of $f,\partial_{t}f|_{t=0}$. In this way we get that an initial data set for
the system (\ref{fieldeqs2}) is a set of the form $(\bar{g},\bar{\phi},K,\pi
)$, which under a Weyl transformation on the initial slice $M$, transforms
according to
\begin{align}
\label{constraintstrans2}%
\begin{split}
\bar{g}^{\prime}=e^{-h}\bar{g}, \;\;\;\;\;  &  ; \;\;\;\;\; \bar{\phi}^{\prime
}=\bar{\phi}-h,\\
K^{\prime}=e^{-\frac{h}{2}}K, \;\;\;\;\;  &  ; \;\;\;\;\; \pi^{\prime}=e^{\frac{h}{2}}\pi.
\end{split}
\end{align}
Clearly, given a solution of (\ref{wistconstraints}) and applying these
transformations to it, we get that, under the choice $h\doteq f(\cdot,0)$, the
transformed quantities $(\bar{g}^{*},\bar{\phi}^{*},K^{*},\pi^{*})$ satisfy
the system (\ref{transformedconstraints}), which is the constraint system for
the space-time equations (\ref{transformedsystem}).

\subsection*{The evolution problem}

We will now analyse the set of equations (\ref{fieldeqs2}). We will follow the
same line of argument typically applied to the vacuum system in GR
(Thorough reviews on this topic can be found in \cite{C-B1} and \cite{Ringstrom}).
With this in mind, we should begin by making the following remarks. First,
note that the geometric identities implied by the symmetries of the curvature
tensor are vital in this problem. We will need to use analogue identities in
our new setting. There is an easy way to show that there is a version of the
contracted Bianchi identities valid in this new context. To see this, just
note that given a Weyl integrable structure $(V,[g,\phi])$, for the
\textit{Riemannian element} in the class $(V,e^{-\phi}g,0)$ we have the usual
contracted Bianchi identities:
\[
e^{\phi}g^{\mu\nu}\nabla_{\mu}G_{\nu\sigma}=0,
\]
where $G_{\nu\sigma}$ is the Riemannian Einstein tensor associated with
$e^{-\phi}g$. Since the Einstein tensor is invariant under Weyl
transformations, then $G_{\nu\sigma}$ represents the Einstein tensor of all
the elements in the class $(V,[g,\phi])$. Thus, we easily see that
\begin{align*}
g^{\mu\nu}\nabla_{\mu}G_{\nu\sigma}(g,\phi)  &  =e^{-\phi}\Big(e^{\phi}%
g^{\mu\nu}\nabla_{\mu}G_{\mu\sigma}(e^{-\phi}g,0)\Big)\\
&  =0.
\end{align*}
Thus,
\[
g^{\mu\nu}\nabla_{\mu}G_{\nu\sigma}(g,\phi)=0,
\]
and this geometric identity is valid for every element in the class
$(V,[g,\phi])$. This identity clearly imposes a condition on the right-hand
side of (\ref{fieldeqs2}), which is typically referred to as a
\textit{conservation equation}. In order for the system (\ref{fieldeqs2}) to
be sensible, this geometric condition must be compatible with the equation
posed for $\phi$. It is a simple matter of computation to show that both
conditions are equivalent.

Secondly, we will rewrite (\ref{fieldeqs2}) merely in terms of the Ricci
tensor. In order to achieve this, notice that taking the trace of this
equation, we get
\begin{align*}
R=-\omega g^{\mu\nu}\nabla_{\mu}\phi\nabla_{\nu}\phi+\frac{n+1}{n-1}V(\phi).
\end{align*}
Using this information, we can rewrite our system in the following way:
\begin{align}
\label{fieldeqs3}%
\begin{split}
&  R_{\alpha\beta}=-\omega\nabla_{\alpha}\phi\nabla_{\beta}\phi+\frac{1}%
{n-1}g_{\alpha\beta}V(\phi),\\
&  g^{\alpha\beta}\nabla_{\alpha}\nabla_{\beta}\phi+\frac{1}{2\omega
}(V^{\prime}(\phi)+V(\phi))=0.
\end{split}
\end{align}

Now, some straightforward computations show that we can write down the Ricci
tensor as:
%\footnote{It is worth noticing that, again, using the invariance of the Ricci tensor under Weyl transformations, we can pick the Riemannian element in a class $(V,[g,\phi])$, for which its ``Riemannian" Ricci tensor will coincide with the Ricci tensor of the class, \textit{i.e}, $R_{\alpha\beta}(g,\phi)={}^{\circ}\!R(e^{-\phi}g)$, and then make a conformal transformation, mapping $e^{-\phi}g\mapsto g$. Since it is well-known how ${}^{\circ}\!R(e^{-\phi}g)$ transforms under conformal transformations on the metric, that is, the relation between ${}^{\circ}\!R(e^{-\phi}g)$ and ${}^{\circ}\!R(g)$ is well-known, we can use such relation in $R_{\alpha\beta}(g,\phi)={}^{\circ}\!R(e^{-\phi}g)$, and we will get the right decomposition for the Ricci tensor of the Weyl connection of the class, in terms of the Ricci tensor associated with the Riemannian connection of a particular element in the class, and the extra terms involving the scalar field.}%
\[
R_{\alpha\beta}={}^{\circ}\!R_{\alpha\beta}+\frac{n-1}{2}{}^{\circ}%
\!\nabla_{\alpha}{}^{\circ}\!\nabla_{\beta}\phi+\frac{n-1}{4}{}^{\circ
}\!\nabla_{\alpha}\phi{}^{\circ}\!\nabla_{\beta}\phi+\frac{1}{2}g_{\alpha
\beta}(g^{\mu\nu}{}^{\circ}\!\nabla_{\mu}{}^{\circ}\!\nabla_{\nu}\phi
-\frac{n-1}{2}{}^{\circ}\!\nabla_{\mu}\phi{}^{\circ}\!\nabla^{\mu}\phi).
\]
It is worth noting that, again, using the invariance of the Ricci tensor under Weyl transformations, we can pick the Riemannian element in a class $(V,[g,\phi])$, for which its ``Riemannian" Ricci tensor coincide with the Ricci tensor of the class, \textit{i.e}, $R_{\alpha\beta}(g,\phi)={}^{\circ}\!R(e^{-\phi}g)$, and then make a conformal transformation, mapping $e^{-\phi}g\mapsto g$. Since it is well-known how ${}^{\circ}\!R(e^{-\phi}g)$ transforms under conformal transformations on the metric, that is, the relation between ${}^{\circ}\!R(e^{-\phi}g)$ and ${}^{\circ}\!R(g)$ is well-known, we can use such relation in $R_{\alpha\beta}(g,\phi)={}^{\circ}\!R(e^{-\phi}g)$, and we will get the right decomposition for the Ricci tensor of the Weyl connection of the class in terms of the Ricci tensor associated with the Riemannian connection of a particular element in the class, and the extra terms involving the scalar field.

Furthermore, a straightforward computation gives that
\begin{align*}
g^{\mu\nu}\nabla_{\mu}\nabla_{\nu}\phi=g^{\mu\nu}{}^{\circ}\!\nabla_{\mu}%
{}^{\circ}\!\nabla_{\nu}\phi-\frac{n-1}{2}{}^{\circ}\!\nabla_{\mu}\phi
{}^{\circ}\!\nabla^{\mu}\phi.
\end{align*}
Thus, the system (\ref{fieldeqs3}) is equivalent to the following set of
equations:
\begin{align}%
\begin{split}
\!{}^{\circ}\!R_{\alpha\beta}+\frac{n-1}{2}{}^{\circ}\!\nabla_{\alpha}%
{}^{\circ}\!\nabla_{\beta}\phi+\frac{n-1}{4}{}^{\circ}\!\nabla_{\alpha}\phi
{}^{\circ}\!\nabla_{\beta}\phi-\frac{1}{4\omega}g_{\alpha\beta}\big( V^{\prime
}(\phi)+V(\phi) \big)=  &  -\omega{}^{\circ}\!\nabla_{\alpha}\phi{}^{\circ
}\!\nabla_{\beta}\phi\\
&  +\frac{1}{n-1}g_{\alpha\beta}V(\phi),\label{fieldeqs4.1}%
\end{split}
\end{align}
\begin{align}
\!\!\!\!\!\!g^{\mu\nu}{}^{\circ}\!\nabla_{\mu}{}^{\circ}\!\nabla_{\nu}%
\phi-\frac{n-1}{2}{}^{\circ}\!\nabla_{\mu}\phi{}^{\circ}\!\nabla^{\mu}%
\phi+\frac{1}{2\omega}(V^{\prime}(\phi)+V(\phi))=0 . \label{fieldeqs4.2}%
\end{align}

The above field equations are posed for $(g,\phi)$. Looking at
(\ref{fieldeqs4.2}), it is clear that this is a
quasilinear wave equation which does not pose any
difficulties from the point of view of the theory of partial differential
equations. In contrast, in the set of equations (\ref{fieldeqs4.1}), the
second order term ${}^{\circ}\!\nabla_{\alpha}{}^{\circ}\!\nabla_{\beta}\phi$
brings some difficulties. If this term was not present, the same approach
towards the vacuum Einstein equations, which shows that such equations can be
written as a system of non-linear wave equations for the metric, could be
applied here, and the lower order terms in $\phi$ would not cause any
difficulties. This approach does not work here. It is interesting to remark
that the very same problem appears when studying, in the Jordan frame, the
Cauchy problem for Brans-Dicke theory and other scalar tensor theories. This
problem has been studied before, for instance, in \cite{Franceces}%
,\cite{Cocke},\cite{Noakes},\cite{Salgado1}. In \cite{Franceces} and
\cite{Cocke}, their considerations are limited to analytic initial data. In
\cite{Noakes}, the \textit{local} well-posedness of the initial value
formulation for the smooth case (or even $C^{k}$) is addressed. By local we
mean that the problem is studied within a coordinate system, which is chosen
by using wave-coodinates, and this condition is not invariant under coordinate
changes. Thus, the well-posedness is both local in space and in time. In
\cite{Salgado1} more general type of scalar-tensor theories are studied and
they are shown to have a well-posed initial value formulation by rewriting
them as a first order hyperbolic system where classical theorems on partial
differential equations (PDE) can be applied. In this approach, they also treat the
local in space and time problem, using a generalized wave-gauge condition.

In contrast to the above mentioned approaches, we will address the global in
space initial value formulation for the set of equations (\ref{fieldeqs4.1}%
)-(\ref{fieldeqs4.2}), and, furthermore, we will do this in such a way that
the invariance of the problem under Weyl transformations becomes transparent.
This will be done by following the standard approach to the global in space
problem in GR, but adapting our gauge conditions to our new setting. In this
direction, let us begin by defining the following properly Riemannian metric
on space-time:
\[
\hat{e}\doteq dt\otimes dt+e,
\]
where $e$ is some smooth Riemannian metric on $M$. Now define the following
vector field on space-time
\[
F^{\lambda}\doteq g^{\alpha\beta}(\Gamma_{\alpha\beta}^{\lambda}-\hat{\Gamma
}_{\alpha\beta}^{\lambda}),
\]
where $\Gamma_{\alpha\beta}^{\lambda}$ are connection components of the Weyl
connection associated with the space-time Weyl structure $(g,\phi)$, and
$\hat{\Gamma}_{\alpha\beta}^{\lambda}$ are the components of the Riemannian
connection associated with the metric $\hat{e}$. The fact that $F$ defines a
vector field on space-time, comes from the fact that the difference of two
connections defines a tensor field. We can rewrite this vector field in the
following way:
\[
F^{\lambda}={}^{\circ}\!F^{\lambda}+g^{\alpha\beta}(\Gamma_{\alpha\beta
}^{\lambda}-{}^{\circ}\Gamma_{\alpha\beta}^{\lambda}),
\]
where ${}^{\circ}\!F^{\lambda}\doteq g^{\alpha\beta}({}^{\circ}\!\Gamma
_{\alpha\beta}^{\lambda}-\hat{\Gamma}_{\alpha\beta}^{\lambda})$. The vector
field ${}^{\circ}\!F$ has been used to analyse the global in space problem in
GR \cite{C-B1}. Now, since $g^{\alpha\beta}(\Gamma_{\alpha\beta}^{\lambda}%
-{}^{\circ}\Gamma_{\alpha\beta}^{\lambda})=\frac{n-1}{2}{}^{\circ}%
\nabla^{\lambda}\phi$, we get that
\begin{align}\label{Weylgauge2}
F^{\lambda}={}^{\circ}\!F^{\lambda}+\frac{n-1}{2}{}^{\circ}\nabla^{\lambda
}\phi.
\end{align}
At this point, we will take advantage of some useful expressions known from
the Cauchy problem in GR. For instance, it is known that the Ricci tensor
${}^{\circ}\!R_{\alpha\beta}$ has the following decomposition
(see, for instance, chapter 6 in \cite{C-B1}):
\[%
\begin{split}
{}^{\circ}\!R_{\alpha\beta} &  =R_{\alpha\beta}^{(\hat{e})}+\frac{1}%
{2}(g_{\alpha\lambda}D_{\beta}{}^{\circ}\!F^{\lambda}+g_{\beta\lambda
}D_{\alpha}{}^{\circ}\!F^{\lambda}),\\
R_{\alpha\beta}^{(\hat{e})} &  =-\frac{1}{2}g^{\lambda\mu}D_{\lambda}D_{\mu
}g_{\alpha\beta}+f_{\alpha\beta}(g,Dg),
\end{split}
\]
where $D$ represents the $\hat{e}$-covariant derivative and $f_{\alpha\beta}$
has smooth dependence on both $g$ and $Dg$. Using this expressions together
with (\ref{Weylgauge2}), we can rewrite
\begin{align}\label{gaugericci}
{}^{\circ}\!R_{\alpha\beta}=R_{\alpha\beta}^{(\hat{e})}+\frac{1}{2}%
(g_{\alpha\lambda}D_{\beta}F^{\lambda}+g_{\beta\lambda}D_{\alpha}F^{\lambda
})-\frac{n-1}{4}(g_{\alpha\lambda}D_{\beta}g^{\lambda\mu}D_{\mu}\phi
+g_{\beta\lambda}D_{\alpha}g^{\lambda\mu}D_{\mu}\phi)-\frac{n-1}{2}D_{\alpha
}D_{\beta}\phi.
\end{align}
Now, the idea is to rewrite (\ref{fieldeqs4.1}) in terms of the $\hat{e}%
$-covariant derivatives, and to use the previous decomposition for the
Riemannian Ricci tensor. The interesting thing is that the last term in
(\ref{gaugericci}) will exactly cancel the problematic term in
(\ref{fieldeqs4.1}). In this way, we get the following system:
\begin{align}\label{fullsys}
\begin{split}
&  {}^{r}\!R_{\alpha\beta}-\rho_{\alpha\beta}+\frac{1}{2}(g_{\alpha\lambda
}D_{\beta}F^{\lambda}+g_{\beta\lambda}D_{\alpha}F^{\lambda})=0,\\
&  g^{\alpha\beta}D_{\alpha}D_{\beta}\phi-{}^{\circ}\!F^{\sigma}D_{\sigma}%
\phi-\frac{n-1}{2}D_{\sigma}\phi D^{\sigma}\phi+\frac{1}{2\omega
}\big(V^{\prime}(\phi)+V(\phi)\big)=0,
\end{split}
\end{align}
where
\[%
\begin{split}
{}^{r}\!R_{\alpha\beta} &  \doteq R_{\alpha\beta}^{(\hat{e})}-\frac{n-1}%
{4}(g_{\alpha\lambda}D_{\beta}g^{\lambda\mu}D_{\mu}\phi+g_{\beta\lambda
}D_{\alpha}g^{\lambda\mu}D_{\mu}\phi)-\frac{n-1}{2}\Delta{}^{\circ}%
\Gamma_{\alpha\beta}^{\sigma}D_{\sigma}\phi\\
&  +\frac{n-1}{4}D_{\alpha}\phi D_{\beta}\phi-\frac{1}{4\omega}g_{\alpha\beta
}\big(V^{\prime}(\phi)+V(\phi)\big),\\
\rho_{\alpha\beta} &  \doteq-\omega D_{\alpha}\phi D_{\beta}\phi+\frac{1}%
{n-1}g_{\alpha\beta}V(\phi),\\
\Delta{}^{\circ}\Gamma_{\alpha\beta}^{\sigma} &  \doteq{}^{\circ}%
\!\Gamma_{\alpha\beta}^{\sigma}-\hat{\Gamma}_{\alpha\beta}^{\sigma}.
\end{split}
\]
Following an approach analogous to the standard approach to the Einstein
equations, we will consider the reduced system:
\begin{align}
&  {}^{r}\!R_{\alpha\beta}-\rho_{\alpha\beta}=0,\label{reducedsys1}\\
&  g^{\alpha\beta}D_{\alpha}D_{\beta}\phi-{}^{\circ}\!F^{\sigma}D_{\sigma}%
\phi-\frac{n-1}{2}D_{\sigma}\phi D^{\sigma}\phi+\frac{1}{2\omega
}\big(V^{\prime}(\phi)+V(\phi)\big)=0.\label{reducedsys2}%
\end{align}
Using the expression for $R_{\alpha\beta}^{(\hat{e})}$ in (\ref{reducedsys1}),
we see that (\ref{reducedsys1}) and (\ref{reducedsys2}) form a system of
hyperquasilinear wave equations for $(g,\phi)$, thus, given appropriate
initial data, the system has one and only one solution on $M\times\lbrack
0,T)$, for some $T>0$. By appropriate initial data, we mean initial data belonging to some appropriate functional spaces. The regularity of the solution will be directly tied to this choice. In particular, for smooth data, we get smooth solutions. For a detailed review of these properties, and low regularity results, see, for instance, \cite{C-B1}.

Now, since $T_{\alpha\beta}=\rho_{\alpha\beta}-\frac{1}{2}g_{\alpha\beta
}g^{\mu\nu}\rho_{\mu\nu}$, we get that, given a solution $(g,\phi)$ for
(\ref{reducedsys1})-(\ref{reducedsys2}), its Einstein tensor satisfies the
following identity:
\begin{align*}
G_{\alpha\beta}-T_{\alpha\beta}=\frac{1}{2}(g_{\alpha\lambda}D_{\beta
}F^{\lambda}+g_{\beta\lambda}D_{\alpha}F^{\lambda}-g_{\alpha\beta}D_{\lambda
}F^{\lambda}).
\end{align*}
Now, the contracted Bianchi identities for Weyl's connection together with
(\ref{reducedsys2}), imply that the divergence of the previous expression
vanishes. This gives us a geometric identity satisfied by $F$. Some
straightforward computations give us that $F$ satisfies the following
homogeneous system of linear wave equations:
\begin{align}
g^{\alpha\mu}D_{\alpha}D_{\mu}F^{\lambda}+A^{\alpha\lambda}_{\nu}%
(g,\phi,Dg,D\phi)D_{\alpha}F^{\nu}+g^{\beta\lambda}R_{\mu\beta}(\hat{e}%
)F^{\mu}=0,
\end{align}
where $A^{\alpha\lambda}_{\nu}$ is a $(2,1)$-tensor field depending on
$(g,\phi,Dg,D\phi)$. Uniqueness of solutions for such a system gives us that,
if $F(\cdot,0)=0$ and $\partial_{t}F(\cdot,0)=0$, then $F\equiv0$. Thus, if we
can guarantee that the initial data for (\ref{reducedsys1})-(\ref{reducedsys2}%
) satisfy these conditions, then the solution to this system actually
satisfies the full system of equations (\ref{fullsys}).

\begin{lemma}
Suppose that $(g,\phi)$ satisfy the reduced system of equation
(\ref{reducedsys1})-(\ref{reducedsys2}) and that the initial data satisfies
$F(\cdot,0)=0$. Then $\partial_{t}F(\cdot,0)=0$ iff the constraint equations
(\ref{wistconstraints}) are satisfied.
\end{lemma}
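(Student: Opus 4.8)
The plan is to project the tensor identity established just above the lemma, namely
\[
G_{\alpha\beta}-T_{\alpha\beta}=\tfrac{1}{2}\big(g_{\alpha\lambda}D_{\beta}F^{\lambda}+g_{\beta\lambda}D_{\alpha}F^{\lambda}-g_{\alpha\beta}D_{\lambda}F^{\lambda}\big),
\]
valid for any solution $(g,\phi)$ of the reduced system (\ref{reducedsys1})-(\ref{reducedsys2}), onto the normal direction of the initial slice. First I would recall that, as shown while deriving (\ref{wistconstraints}) from (\ref{constraints6.0}), the constraint equations are exactly the statement that $(G-T)(e_0,e_0)=0$ and $(G-T)(e_0,e_i)=0$ hold on $M\cong M\times\{0\}$. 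Thus the lemma reduces to proving that, given $F(\cdot,0)=0$, these $n+1$ normal projections of the right-hand side of the identity vanish precisely when $\partial_t F(\cdot,0)=0$.

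The key computational step is to evaluate $D_\nu F^{\lambda}$ on the initial slice. Because $F$ vanishes identically on $M\times\{0\}$, its tangential derivatives vanish there, $D_i F^{\lambda}|_{t=0}=\partial_i F^{\lambda}|_{t=0}=0$, and every Christoffel contraction $\hat{\Gamma}^{\lambda}_{\mu\rho}F^{\rho}$ drops out; hence only the transverse derivative survives, giving $e_0^{\mu}D_\mu F^{\lambda}|_{t=0}=\partial_t F^{\lambda}$ and $D_\lambda F^{\lambda}|_{t=0}=\partial_t F^{0}$. Substituting this into the identity and contracting against the adapted frame, in which $g(e_0,e_0)=-N^2$ and $g(e_0,e_i)=0$, I expect to arrive at
\[
(G-T)(e_0,e_0)\big|_{t=0}=-\tfrac{N^2}{2}\,\partial_t F^{0},\qquad (G-T)(e_0,e_i)\big|_{t=0}=\tfrac{1}{2}\big(\beta_i\,\partial_t F^{0}+g_{ij}\,\partial_t F^{j}\big),
\]
where $\beta_i=g_{ij}\beta^{j}$ is the harmless off-diagonal contribution of the shift.

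The conclusion is then read off from the linear-algebraic structure: the map $(\partial_t F^{0},\partial_t F^{j})\mapsto\big((G-T)(e_0,e_0),(G-T)(e_0,e_i)\big)$ is given by a block lower-triangular matrix whose diagonal blocks are $-N^2$ and $g_{ij}$, both invertible since $N>0$ and $\bar{g}$ is Riemannian on $M$. Therefore the constraints hold on the initial slice if and only if $\partial_t F^{0}=\partial_t F^{j}=0$, i.e. $\partial_t F(\cdot,0)=0$, which is the claimed equivalence. I anticipate that the only delicate point is the frame bookkeeping in the second paragraph — keeping the coordinate index $\lambda$ distinct from the adapted-frame contractions and verifying that no spatial derivative of $F$ contaminates the result — whereas the remainder is a direct transcription of the vacuum-Einstein argument to the present $T$ built from the scalar field and potential.
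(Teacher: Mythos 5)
Your proof is correct and follows essentially the same route as the paper: both project the identity $G_{\alpha\beta}-T_{\alpha\beta}=\tfrac{1}{2}\big(g_{\alpha\lambda}D_{\beta}F^{\lambda}+g_{\beta\lambda}D_{\alpha}F^{\lambda}-g_{\alpha\beta}D_{\lambda}F^{\lambda}\big)$ onto the $(e_0,e_0)$ and $(e_0,e_i)$ components, use $F(\cdot,0)=0$ to kill the tangential derivatives and Christoffel contractions so that only $\partial_t F^\lambda$ survives, and conclude by inverting the resulting triangular linear system in $(\partial_t F^0,\partial_t F^j)$ whose diagonal blocks $-N^2$ and $g_{ij}$ are invertible. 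The only cosmetic difference is that you keep the shift term $\beta_i$ explicit by working with coordinate components, whereas the paper works directly in the adapted frame where $g_{0i}=0$ and the off-diagonal block is absent; the conclusion is identical.
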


\begin{proof}
We have the following decomposition:
\begin{align*}
G_{\alpha\beta}={}^{r}\!G_{\alpha\beta}+\frac{1}{2}(g_{\alpha\lambda}D_\beta F^{\lambda}+g_{\beta\lambda}D_{\alpha}F^{\lambda}-g_{\alpha\beta}D_{\lambda}F^{\lambda}),
\end{align*}
where ${}^{r}\!G_{\alpha\beta}={}^{r}\!R_{\alpha\beta}-\frac{1}{2}g_{\alpha\beta}(g^{\mu\nu}\:{}^{r}\!R_{\alpha\beta})$.
Note that if $F|_{t=0}=0$, then $D_iF|_{t=0}=0$. Then we have that:
\begin{align*}
G_{\alpha 0}|_{t=0}={}^{r}G_{\alpha 0}|_{t=0}+\frac{1}{2}(g_{\alpha i}D_0F^i-N^2D_\alpha F^0)|_{t=0}.
\end{align*}
Thus,
\begin{align*}
(G_{\alpha 0}-T_{\alpha 0})|_{t=0}=({}^{r}G_{\alpha 0}-T_{\alpha 0})|_{t=0}+\frac{1}{2}(g_{\alpha i}D_0F^i-N^2D_\alpha F^0)|_{t=0}.
\end{align*}
Then, since $(g,\phi)$ solve the reduced system, we get that
\begin{align*}
(G_{\alpha 0}-T_{\alpha 0})|_{t=0}=\frac{1}{2}(g_{\alpha i}D_0F^i-N^2D_\alpha F^0)|_{t=0}.
\end{align*}
Thus we see that the constraint equations are satisfied iff
\begin{align*}
0=(g_{\alpha i}D_0F^i-N^2D_\alpha F^0)|_{t=0}.
\end{align*}
Settig $\alpha=0$, the previous expression gives $\partial_tF^0=0$ and setting $\alpha=j$ gives $\partial_tF^j=0$.
\end{proof}

Using this lemma, we see that if we give an initial data set satisfying the
constraint equations plus the \textit{gauge condition} $F|_{t=0}=0$, then the
system (\ref{reducedsys1})-(\ref{reducedsys2}) will have a unique solution,
which, in fact, solves the full system (\ref{fieldeqs4.1})-(\ref{fieldeqs4.2}%
). Let us see that we can always pick initial data satisfying the gauge condition.

Our geometric initial data set for (\ref{reducedsys1})-(\ref{reducedsys2}) is
$(M,\bar{g},\bar{\phi},K,\pi)$. We need to provide initial data for $(g,\phi
)$. It is clear that $\bar{g},\bar{\phi}$ determine the initial data
$g_{ij}|_{t=0},\phi|_{t=0}$. Also that, once we fix the initial data for the
lapse and shift, $K$ and $\pi$ determine the initial data $\partial_{t}%
g_{ij}|_{t=0}$ and $\partial_{t}\phi|_{t=0}$. Then, we freely set
$g_{00}|_{t=0}=-1$ and $g_{0i}|_{t=0}=0$, which amounts to setting
$N|_{t=0}=1$ and $\beta|_{t=0}=0$. We still need initial the data
$\partial_{t}g_{00}|_{t=0},\partial_{t}g_{0i}|_{t=0}$. These initial data will
come from the solving the gauge condition. Note that
\begin{align*}
F^{\lambda}|_{t=0}=\{g^{\alpha\beta}{}^{\circ}\!\Gamma^{\lambda}_{\alpha\beta
}+\frac{n-1}{2}{}^{\circ}\nabla^{\lambda}\phi-g^{\alpha\beta}\hat{\Gamma
}^{\lambda}_{\alpha\beta}\}|_{t=0}.
\end{align*}
Then, since we have that
\begin{align*}%
\begin{split}
g^{\alpha\beta}{}^{\circ}\!\Gamma^{0}_{\alpha\beta}|_{t=0}  &  =\frac{1}%
{2}\partial_{t}g_{00}|_{t=0}-\bar{g}^{ij}{}^{\circ}\!K_{ij},\\
g^{\alpha\beta}{}^{\circ}\!\Gamma^{i}_{\alpha\beta}|_{t=0}  &  =-\overline
{g}^{ij}\partial_{t}g_{j0}|_{t=0}+\overline{g}^{kl}{}^{\circ}\!\bar{\Gamma
}^{i}_{kl},
\end{split}
\end{align*}
we can solve the algebraic equations $F^{\lambda}|_{t=0}=0$ for $\partial
_{t}g_{00}|_{t=0},\partial_{t}g_{0i}|_{t=0}$. With these choices for
$\partial_{t}g_{00}|_{t=0},\partial_{t}g_{0i}|_{t=0}$ we complete the initial
data for the reduced system. Then, the theory of hyperbolic partial
differential equations guarantees the existence of a unique solution for this
system, which will be as regular as the initial data. Thus, we have the
following theorem:

\begin{thm}
[\textbf{Existence in a frame}]\label{existence1} Given initial data
$(M,\bar{g},\bar{\phi},K,\pi)$ satisfying the constraint equations
(\ref{wistconstraints}), there is a Cauchy development of this initial data
set into a space-time $(V,g,\phi)$ satisfying the field equations
(\ref{fieldeqs4.1})-(\ref{fieldeqs4.2}).
\end{thm}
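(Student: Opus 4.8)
The plan is to assemble the results established above into the standard Choquet-Bruhat scheme, so the proof is essentially an \emph{orchestration} of the reduced-system solvability, the constraint-propagation Lemma, and the wave equation for $F$. The argument proceeds in four steps: (i) manufacture Cauchy data for the reduced system out of the geometric data so that the gauge condition $F(\cdot,0)=0$ holds; (ii) solve the reduced system via the theory of hyperquasilinear wave equations; (iii) propagate the gauge to show $F\equiv 0$; and (iv) conclude that a solution of the reduced system solves the full system.

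First I would promote the geometric data $(M,\bar g,\bar\phi,K,\pi)$ to spacetime Cauchy data for $(g,\phi)$, exactly as in the discussion preceding the theorem. Set $g_{ij}|_{t=0}=\bar g_{ij}$ and $\phi|_{t=0}=\bar\phi$, and fix part of the gauge freedom by choosing $N|_{t=0}=1$, $\beta|_{t=0}=0$, i.e. $g_{00}|_{t=0}=-1$, $g_{0i}|_{t=0}=0$. With this choice $\pi=\frac{1}{N}e_0(\phi)=\partial_t\phi|_{t=0}$, so $\pi$ fixes $\partial_t\phi|_{t=0}$, while inverting (\ref{extcurv}) fixes $\partial_t g_{ij}|_{t=0}$ in terms of $\bar K$, $\pi$ and $\bar g$. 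The only data left undetermined, namely $\partial_t g_{00}|_{t=0}$ and $\partial_t g_{0i}|_{t=0}$, are then obtained by solving the algebraic equations $F^{\lambda}|_{t=0}=0$; using the displayed expressions for $g^{\alpha\beta}{}^{\circ}\!\Gamma^{0}_{\alpha\beta}$ and $g^{\alpha\beta}{}^{\circ}\!\Gamma^{i}_{\alpha\beta}$, these are linear in the unknown time derivatives and uniquely solvable. This yields a complete Cauchy data set for $(g,\phi)$ satisfying the gauge condition $F(\cdot,0)=0$.

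Second, substituting the expression for $R^{(\hat e)}_{\alpha\beta}$ into (\ref{reducedsys1})-(\ref{reducedsys2}) exhibits these as a system of hyperquasilinear wave equations for $(g,\phi)$, so the standard existence and uniqueness theorem (see \cite{C-B1}) produces a unique solution on some $M\times[0,T)$, $T>0$, with regularity inherited from the data. For this I must ensure $g$ stays Lorentzian near $t=0$: the constructed initial metric has the correct signature, and non-degeneracy is an open condition, hence persists on a neighbourhood of the initial slice, which is precisely where hyperbolicity of the principal part is required. Third, I would propagate the gauge. Since the data satisfy the constraints (\ref{wistconstraints}) and $F(\cdot,0)=0$, the Lemma gives $\partial_t F(\cdot,0)=0$; and because any solution of the reduced system makes $F$ satisfy the homogeneous linear wave equation displayed above, uniqueness for that linear hyperbolic system forces $F\equiv 0$ on all of $M\times[0,T)$.

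Finally, with $F\equiv 0$ the reduced system (\ref{reducedsys1})-(\ref{reducedsys2}) coincides with the full system (\ref{fullsys}), which is equivalent to (\ref{fieldeqs4.1})-(\ref{fieldeqs4.2}); hence $(V,g,\phi)$ is a development of the given data solving the field equations. The genuinely substantive inputs — the gauge reduction that cancels the problematic term ${}^{\circ}\!\nabla_{\alpha}{}^{\circ}\!\nabla_{\beta}\phi$ and the constraint-propagation Lemma — have already been carried out, so I do not expect a deep obstacle here. The one point deserving care is the consistency of the data construction in the first step: verifying that $F^{\lambda}|_{t=0}=0$ is solvable for $\partial_t g_{00}|_{t=0},\partial_t g_{0i}|_{t=0}$ and that the resulting metric retains Lorentzian signature near the slice. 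Everything beyond that is a routine invocation of hyperbolic PDE theory.
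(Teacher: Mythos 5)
Your proof is correct and follows essentially the same route as the paper: the paper's own argument (given in the discussion preceding the theorem) is exactly this orchestration — construct data with $N|_{t=0}=1$, $\beta|_{t=0}=0$, determine $\partial_t g_{00}|_{t=0},\partial_t g_{0i}|_{t=0}$ from the algebraic gauge condition $F^\lambda|_{t=0}=0$, solve the hyperquasilinear reduced system, and use the constraint-propagation lemma together with the homogeneous wave equation for $F$ to conclude $F\equiv 0$, hence a solution of the full system. Your added remark on persistence of the Lorentzian signature near the initial slice is a standard point the paper leaves implicit, not a deviation in method.
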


At this point, it is worth noting that the gauge condition we have chosen, \textit{i.e}, $F|_{t=0}=0$, is similar to the one proposed in \cite{Salgado1}. The difference between the two gauges is that, while in \cite{Salgado1} the gauge condition is cleverly chosen so as to apply to quite general scalar-tensor theories, this condition, despite its simplicity, reveals a non-trivial insight presented by the author, and it is presented as a local condition,  meaning that is represents a choice for a preferred coordinate system. On the other hand, our gauge condition comes as a natural consequence of following the standard approach to the global in space formulation of the Cauchy problem and exploiting the specific geometric properties of WGSTT.

\bigskip

We will now show that the problem is \textit{invariant under Weyl
transformations}. By this we mean that if we make a space-time Weyl
transformation, then, the space-time described in this other frame results as
the evolution of initial data \textit{equivalent} to $(\overline{g}%
,\overline{\phi},K,\pi)$. And conversely, given an \textit{equivalent} initial
data set to $(\overline{g},\overline{\phi},K,\pi)$, then this initial data set
admits a development into a space-time \textit{equivalent} to the one
generated by $(\overline{g},\overline{\phi},K,\pi)$.

Consider that we have an initial data set $(\overline{g},\overline{\phi}%
,K,\pi)$ satisfying the constraint equations (\ref{wistconstraints}) for the
system (\ref{fieldeqs2}). Then, using \textbf{Theorem \ref{existence1}}, we
know that this initial data set admits a development into a space-time
$(V,g,\phi)$ satisfying the field equations (\ref{fieldeqs2}). Then, perform a
space-time Weyl transformation of the form
\begin{align}\label{consistency1}
g^{\prime}=e^{-f}g, \;\;\;\; \phi^{\prime}=\phi-f
\end{align}
We know that $(g^{\prime},\phi^{\prime})$ will satisfy the system of equations
(\ref{transformedsystem}). Furthermore, since the gauge condition $F=0$ is
Weyl-invariant, \textit{i.e.}, if it satisfied in one frame, then it is
satisfied in all the equivalence class, then applying the same procedure
described above for the system (\ref{fieldeqs2}) to the transformed system
(\ref{transformedsystem}), we see that (\ref{transformedsystem}) is actually
equivalent to the set of reduced equations that result from setting
$F(g^{\prime},\phi^{\prime})=0$. Thus, we see that $(g^{\prime},\phi^{\prime
})$ actually solve a system of non-linear wave equations in the space-time
metric $g^{\prime}$ analogous to (\ref{reducedsys1})-(\ref{reducedsys2}).
Furthermore, this system takes initial data satisfying the constraints
(\ref{transformedconstraints}). This shows that, considering a Weyl transformation on the initial data $(\overline{g},\overline{\phi},K,\pi)$ of the form
\begin{align}\label{consistency2}
\bar{g}^{\prime}=e^{-h}\bar{g}, \;\;\;\; \bar{\phi}^{\prime
}=\bar{\phi}-h, \;\;\;\; K^{\prime}=e^{-\frac{h}{2}}K, \;\;\;\; \pi^{\prime}=e^{\frac
{h}{2}}\pi,
\end{align}
and setting $h\doteq f(\cdot,0)$, then $(V,g^{\prime},\phi^{\prime})$ results
as the evolution of the initial data set $(\overline{g}^{\prime}%
,\overline{\phi}^{\prime},K^{\prime},\pi^{\prime})$, which is equivalent to
$(\overline{g},\overline{\phi},K,\pi)$.

Conversely, if we make a Weyl transformation on the initial data
$(\overline{g},\overline{\phi},K,\pi)$ of the form (\ref{consistency2}), then
the transformed initial data will satisfy the system
(\ref{transformedconstraints}) if we consider $f|_{t=0}=h$. Thus, if we pick a
function $f$ on space-time satisfying $f|_{t=0}=h$, and then we perform a
space-time Weyl transformation on $(g,\phi)$ of the form (\ref{consistency1}),
we see that $(g^{\prime},\phi^{\prime})$ on $M_{0}$ satisfy the initial
conditions $(\bar{g}^{\prime},\bar{\phi}^{\prime},K^{\prime},\pi^{\prime})$.
Thus, any initial data set equivalent to $(\overline{g},\overline{\phi}%
,K,\pi)$ has an evolution into a space-time equivalent to $(V,g,\phi)$. We
summarize this with the following theorem.

\begin{thm}
Given an initial data set $(M,\bar{g},\bar{\phi},K,\pi)$ satisfying the
constraint equations (\ref{wistconstraints}), then the Cauchy problem for the
WGSTT is well-posed.
\end{thm}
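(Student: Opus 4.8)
The plan is to read this final theorem as the synthesis of the three ingredients already assembled above, packaged as Hadamard well-posedness descended to the equivalence class $(V,[g,\phi])$. First I would invoke \textbf{Theorem \ref{existence1}}: for any data $(M,\bar g,\bar\phi,K,\pi)$ satisfying the constraints (\ref{wistconstraints}), it produces a development $(V,g,\phi)$ of the reduced system (\ref{reducedsys1})--(\ref{reducedsys2}). With the gauge choice $F|_{t=0}=0$ and the \textbf{Lemma}, the constraints force $\partial_t F|_{t=0}=0$, and the homogeneous linear wave equation obeyed by $F$ then propagates $F\equiv 0$; hence the development actually solves the full system (\ref{fieldeqs4.1})--(\ref{fieldeqs4.2}), equivalently (\ref{fieldeqs2}). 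This disposes of existence.

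Next I would address uniqueness and continuous dependence. Once the gauge data---the background Riemannian metric $e$ entering $\hat e=dt\otimes dt+e$, together with the initial lapse $N|_{t=0}=1$ and shift $\beta|_{t=0}=0$---are fixed, equations (\ref{reducedsys1})--(\ref{reducedsys2}) constitute a quasilinear system of wave equations, so the standard theory (see \cite{C-B1,Ringstrom}) supplies a unique solution depending continuously on the data in the relevant functional spaces. This is Hadamard well-posedness \emph{in a fixed frame and gauge}. To upgrade frame-wise uniqueness to geometric uniqueness I would run the usual argument: given two developments of the same data, solve a wave equation for a diffeomorphism carrying each into the gauge $F=0$, where the reduced system has a unique solution, so the two developments agree up to a diffeomorphism. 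What makes the reduction hyperbolic at all is the cancellation recorded in (\ref{gaugericci}): via (\ref{Weylgauge2}) the term $-\frac{n-1}{2}D_\alpha D_\beta\phi$ exactly kills the problematic second-derivative term ${}^{\circ}\!\nabla_\alpha{}^{\circ}\!\nabla_\beta\phi$ of (\ref{fieldeqs4.1}).

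Finally I would invoke the Weyl-invariance established in the two paragraphs preceding the theorem, which verifies the commutative diagram of the figure. Concretely, transforming the data by (\ref{constraintstrans2}) with $h\doteq f(\cdot,0)$ and evolving yields exactly the Weyl-transformed space-time $(V,e^{-f}g,\phi-f)$ solving (\ref{transformedsystem}), and conversely any data set equivalent to $(\bar g,\bar\phi,K,\pi)$ develops into a space-time equivalent to $(V,g,\phi)$. Because the gauge condition $F=0$ is itself Weyl-invariant, the reduction commutes with the frame change, so the entire existence/uniqueness package descends to the class $(V,[g,\phi])$. Combining this with the frame-wise result above gives well-posedness for the class, which is the assertion.

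The hard part will be the geometric-uniqueness step, specifically reconciling the two independent gauge freedoms in play: the diffeomorphism freedom used to impose $F=0$ and the Weyl freedom (\ref{weyl group}). One must check that fixing $F=0$ on two developments does not covertly consume the Weyl freedom, and that the resulting identification is consistent in both directions of the diagram, so that ``evolve then transform'' and ``transform then evolve'' genuinely land in the same equivalence class. The remaining inputs---hyperbolic existence, uniqueness, and continuous dependence---are by now routine once the cancellation in (\ref{gaugericci}) has rendered the reduced system manifestly hyperbolic.
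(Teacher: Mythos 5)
Your proposal is correct and follows essentially the same route as the paper: the paper's proof of this theorem consists precisely of Theorem \ref{existence1} (existence in a frame via the reduced system and the gauge-propagation lemma) combined with the two-directional Weyl-invariance argument — evolving then transforming by (\ref{consistency1}) agrees with transforming the data by (\ref{consistency2}) (with $h\doteq f(\cdot,0)$) and then evolving, using that the condition $F=0$ is Weyl-invariant. The only difference is organizational: the diffeomorphism/geometric-uniqueness issue you single out as the hard part is not part of the paper's proof of this statement at all; it is deferred to the subsequent short-time geometric uniqueness theorem (Theorem \ref{geouniq}), where it is handled by the wave-map construction essentially as you sketch.
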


\qed

\subsection*{Geometric uniqueness}

We will finally discuss a subtle point, which in the context of GR is referred
to as \textit{geometric uniqueness}. This is related to the arbitrariness in
the choice of the initial data for the lapse and shift. As we have sated
above, this is a gauge freedom which is already present in GR. In our present
context, besides this gauge freedom, we have added Weyl transformations. We
have already dealt with the latter, and we will now deal with the former.

The question that naturally arises in this context, is that different choices
for the initial values of the lapse and shift will provide different initial
data sets for our PDE system. Thus, we will get \textit{different} solutions
for each case. If this were the case, then the lapse and shift would not
actually be working as gauge variables. It is a very well-known fact that the
solution to this problem comes by taking into account that in space-time we
regard \textit{diffeomorphic} structures as equivalent. That is, given the
group of diffeomorphisms on a given space-time, such group defines an action
on the different tensor fields defined on space-time, acting via pull-back,
and such an action defines an equivalence relation. With this in mind, within
the context of GR, it can be shown, for instance for vacuum initial data sets,
that two different choices of initial data for the lapse and shift give rise
to initial data sets whose associated space-time evolutions are diffeomorphic,
and thus \textit{equivalent}
(see, for instance, chapter 6 in \cite{C-B1}). We will now
show that the very same procedure works in this context. In order to present
this result we will first need to introduce some terminology.

In order to show that different choices of initial data for the lapse and
shift give rise to diffeomorphic space-times, we will actually show how to
construct such diffeomorphisms. For this purpose we need to introduce the
notion of a \textit{wave map}.

Given two pseudo-Riemannian manifolds $(V,g)$ and $(M,h)$, let $u:V\mapsto M$
be a smooth map. We can consistently define the vector bundle
$E\xrightarrow{\pi} V$, over $V$, whose typical fibre over a point $x\in V$ is
given by $T^{*}_{x}V\otimes T_{u(x)}M$. That is,
\begin{align*}
E=\coprod_{x\in V} T^{*}_{x}V\otimes T_{u(x)}M.
\end{align*}

We can now endow $E$ with a natural connection, denoted by $\nabla^{E}$, by
considering local charts $(\mathcal{W},X)$, $\mathcal{W}\subset V$, and
$(\mathcal{\tilde{W}},Y)$, $\mathcal{\tilde{W}}\subset M$, and applying
Leibniz rule to a local section of $E$ of the form $f=f^{A}_{\alpha}%
dx^{\alpha}\otimes\partial_{y^{A}}$. Now we define the action of the
connection on an element of $T^{*}_{x}V$ using the coefficients of the
Riemannian connection at $x$ of the metric $g$, while the coefficients acting
on $T_{u(x)}M$ are the pull-back by $u$ of the Riemannian connection 1-form of
the metric $h$ at $u(x)$. Some simple computations then give us the
following:
\begin{align}
\nabla^{E}_{\beta}f^{A}_{\alpha}=\partial_{\beta}f^{A}_{\alpha}-{}^{g}%
\!\Gamma^{\sigma}_{\alpha\beta}f^{A}_{\sigma}+\partial_{\beta}u^{C}{}%
^{h}\!\Gamma^{A}_{BC}f^{B}_{\alpha},
\end{align}
where the connection coefficients ${}^{g}\!\Gamma^{\sigma}_{\alpha\beta}$ and
${}^{h}\!\Gamma^{A}_{BC}$ are the Christoffel symbols associated with the
Riemannian connections of $g$ and $h$ respectively.

%Note that $\nabla^{E}f\in T^{*}V\otimes E$, and higher derivatives can be obtained by using the product rule.

Furthermore, notice that the \textquotedblleft\textit{gradient}" $\partial u$
of the map $u$, given in local coordinates by $\partial_{\alpha}u^{A}$,
defines a section of the vector bundle $E$. With all this notations, it is
said that the mapping $u:(V,g)\mapsto(M,h)$ is a \textbf{wave map} if it
satisfies the following PDE:
\[
\mathrm{tr_{g}}\nabla^{E}\partial u=0.
\]
For a detailed review on wave maps, see \cite{C-B2}.

Returning to the problem of geometric uniqueness for the initial value
formulation for the WGSTT, given two smooth solutions $(V_{1},g_{1},\phi_{1})$
and $(V_{2},g_{2},\phi_{2})$ to the Cauchy problem with the same geometric
initial data $(M,\bar{g},\bar{\phi},K,\pi)$, we will construct a
diffeomorphism from a neighbourhood $U_{1}$ of $M_{0}\subset V_{1}$ onto a
neighbourhood $U_{2}$ of $M_{0}\subset V_{2}$. For this purpose we need the
following lemma.

\begin{lemma}
Suppose that $(M\times\mathbb{R},g)$ is a globally hyperbolic manifold with
$g$ smooth. There exists a wave map $f$, which is a diffeomorphism from a
strip $W_{\tau}\doteq M\times(-\tau,\tau)$ onto a neighbourhood of
$M\times\{0\}$ in $M\times\mathbb{R}$, such that $f$ takes the following
initial values:
\begin{align}
\label{ID}f(\cdot,0)=Id, \;\;\; ; \;\;\; \partial_{t}f(\cdot,0)=(a,b),
\end{align}
where $a>0$ is a specified scalar on $M$ and $b$ a specified tangent vector to
$M$. \qed

\end{lemma}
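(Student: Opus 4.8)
The plan is to read the wave-map equation $\mathrm{tr}_g\nabla^E\partial u=0$ as a Cauchy problem for a quasilinear system of wave equations, apply the standard hyperbolic existence theory to produce $f$ with the prescribed data, and then upgrade the solution to a local diffeomorphism by the inverse function theorem. First I would write the equation in local coordinates using the expression for $\nabla^E$ derived above. Taking $u=f$ with components $f^A$, contracting $\nabla^E_\beta\partial_\alpha f^A$ with $g^{\alpha\beta}$ turns the equation into
\begin{align*}
g^{\alpha\beta}\partial_\alpha\partial_\beta f^A - g^{\alpha\beta}\,{}^{g}\!\Gamma^{\sigma}_{\alpha\beta}\partial_\sigma f^A + g^{\alpha\beta}\,{}^{h}\!\Gamma^{A}_{BC}\,\partial_\alpha f^B\partial_\beta f^C = 0.
\end{align*}
The principal part is $g^{\alpha\beta}\partial_\alpha\partial_\beta$ acting diagonally on each component $f^A$, while the remaining terms depend smoothly on $f$ and its first derivatives. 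Since $(M\times\mathbb{R},g)$ is globally hyperbolic, each slice $M\times\{t\}$ is a spacelike Cauchy hypersurface for $g$, so this is a genuinely hyperbolic system to which the classical well-posedness theorems for quasilinear wave equations apply.

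I would then prescribe the Cauchy data $f(\cdot,0)=\mathrm{Id}$ and $\partial_t f(\cdot,0)=(a,b)$ on $M\times\{0\}$, taking the target coordinates equal to the source coordinates along the slice so that these conditions make sense componentwise. The hyperbolic existence theory yields a unique solution $f$ on a strip $W_\tau=M\times(-\tau,\tau)$ for some $\tau>0$, of the same regularity as the data, hence smooth. When $M$ is non-compact the existence time may a priori vary over $M$; one then applies the theory over an exhaustion of $M$ by relatively compact sets and intersects the resulting strips to obtain a uniform $W_\tau$.

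It remains to show that $f$ is a diffeomorphism onto a neighbourhood of $M\times\{0\}$, for which I would examine the differential $df$ along the slice. From $f(\cdot,0)=\mathrm{Id}$ one reads off $\partial_j f^i|_{t=0}=\delta^i_j$ and $\partial_j f^0|_{t=0}=0$, while the prescribed velocity gives $\partial_t f^0|_{t=0}=a$ and $\partial_t f^i|_{t=0}=b^i$. Thus in the adapted coordinates the Jacobian of $f$ at each point of $M\times\{0\}$ is block-triangular, with spatial block the identity and time-time entry $a$, so its determinant equals $a>0$. The inverse function theorem then shows that $f$ is a local diffeomorphism at every point of the slice.

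The main obstacle is upgrading this pointwise local invertibility to a genuine diffeomorphism of a full strip onto a neighbourhood of $M\times\{0\}$, since a local diffeomorphism need not be injective. I would handle this by the standard argument that a local diffeomorphism restricting to the identity on the hypersurface $M\times\{0\}$ is, after possibly shrinking $\tau$ (again using an exhaustion of $M$ in the non-compact case), globally injective on $W_\tau$; its image is then open and contains $M\times\{0\}$, yielding the desired diffeomorphism onto a neighbourhood of the slice. This completes the construction of the wave map with the prescribed initial values.
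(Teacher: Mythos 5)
Your overall strategy---reading the wave-map equation as a hyperbolic system with diagonal principal part $g^{\alpha\beta}\partial_\alpha\partial_\beta$, solving the Cauchy problem with the prescribed data, computing the Jacobian along the slice (block-triangular with determinant $a>0$), and then upgrading local invertibility to injectivity near the initial hypersurface---is exactly the route the paper relies on: the paper gives no proof of its own and defers to Choquet-Bruhat \cite{C-B1},\cite{C-B2}, whose argument has this structure. (A minor point in your favour: since $g$ is a fixed background metric, the system is in fact semilinear, not quasilinear, which only makes the PDE step easier.) The local-in-space part of your write-up is sound.

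The genuine gap is your treatment of the non-compact case, which is precisely the ``global in space'' issue that the lemma's strip formulation, and the paper as a whole, is about. Applying local existence over an exhaustion $K_1\subset K_2\subset\cdots$ of $M$ by relatively compact sets yields existence times $\tau_n>0$ that may shrink as $n\to\infty$; ``intersecting the resulting strips'' therefore produces a strip of width $\inf_n\tau_n$, which can be zero, so no uniform $W_\tau$ is obtained. The same defect recurs in your final injectivity step (``again using an exhaustion of $M$''). To actually get a diffeomorphism defined on a full strip $M\times(-\tau,\tau)$ one needs uniform control at spatial infinity: either work in the global-in-space functional framework of \cite{C-B1} (Sobolev spaces on sliced manifolds with uniformity built in, so that the energy estimates carry constants independent of the exhaustion and hence give a uniform lower bound on the existence time), or else glue the local solutions by uniqueness and domain of dependence---but that construction only yields a solution on a neighbourhood of $M\times\{0\}$ whose temporal width may pinch off at infinity, which is strictly weaker than the statement to be proved. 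As written, your argument establishes the lemma only for compact $M$, where a finite subcover gives $\min_n\tau_n>0$ and the injectivity-near-a-compact-set argument is standard.
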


For a short proof of this lemma see \cite{C-B1}, and for a detailed treatment
of this problem see \cite{C-B2}. We can prove the following:

\begin{lemma}
Let $f$ be the unique wave map corresponding to the data (\ref{ID}). Such map
defines a diffeomorphism between neighbourhoods of $M_{0}$ in $V$. Let
$\underline{g}\doteq f^{*}g$ and $\underline{\phi}\doteq f^{*}\phi$ be the
metric and scalar field in the wave map pull back by $f$ of a Lorentzian
metric $g$ and a scalar field $\phi$ which result as the evolution of a
specified initial data set $(\bar{g},\bar{\phi},K,\pi)$. Then, it is possible
to choose ``$a$" and ``$b$" such that the initial values of the metric
$\underline{g}$ and the scalar field $\underline{\phi}$, and their first
time-derivatives, take preassigned specified values, depending only on the
geometric data $(\bar{g},\bar{\phi},K,\pi)$.
\end{lemma}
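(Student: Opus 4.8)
The plan is to analyze how the pulled-back fields $\underline{g} = f^*g$ and $\underline{\phi} = f^*\phi$ depend, at $t=0$, on the wave-map initial data $(a,b)$ given by \eqref{ID}, and then to show that the map $(a,b) \mapsto (\text{initial values of } \underline{g}, \underline{\phi} \text{ and their time derivatives})$ can be inverted to achieve any preassigned values compatible with the geometric data $(\bar{g},\bar{\phi},K,\pi)$. Since $f(\cdot,0) = \mathrm{Id}$, the spatial restrictions are fixed automatically: $\underline{g}_{ij}|_{t=0} = \bar{g}_{ij}$ and $\underline{\phi}|_{t=0} = \bar{\phi}$, independently of $(a,b)$. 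So the content lies entirely in the first time-derivatives of the pulled-back fields and in the components $\underline{g}_{00}, \underline{g}_{0i}$ at $t=0$, which are precisely the quantities encoding the lapse and shift.

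First I would write $f$ in adapted coordinates and use $f(\cdot,0)=\mathrm{Id}$ together with $\partial_t f(\cdot,0) = (a,b)$ to compute the Jacobian $\partial_\alpha f^A$ on the initial slice. Because the spatial part of $f$ is the identity at $t=0$, the only nontrivial entries of the Jacobian are $\partial_t f^0(\cdot,0) = a$ and $\partial_t f^i(\cdot,0) = b^i$. Substituting into the standard pull-back formula $\underline{g}_{\alpha\beta} = (\partial_\alpha f^A)(\partial_\beta f^B) g_{AB}\circ f$, I would read off that the purely spatial components are unchanged, while the mixed and temporal components $\underline{g}_{00}|_{t=0}$ and $\underline{g}_{0i}|_{t=0}$ become explicit expressions in $a$, $b$, and the values of $g_{AB}$ at the slice. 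In particular, $\underline{N}|_{t=0}$ and $\underline{\beta}|_{t=0}$ depend on $(a,b)$ in an invertible way: choosing $a$ controls the initial lapse (recall $a>0$, matching the positivity of the lapse) and choosing $b$ controls the initial shift. This step is essentially linear algebra and poses no difficulty.

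The substantive step is to compute $\partial_t \underline{g}_{\alpha\beta}|_{t=0}$ and $\partial_t \underline{\phi}|_{t=0}$ and to verify that the geometric quantities $K$ and $\pi$ extracted from $\underline{g}, \underline{\phi}$ agree with those extracted from $g, \phi$. The point is that $K$ and $\pi$ are \emph{geometric}: they are defined via the unit normal and the second fundamental form of the embedded slice, which are invariant under a diffeomorphism fixing the slice pointwise. Concretely, differentiating the pull-back formula in $t$ brings in second derivatives of $f$ at $t=0$; these are \emph{not} freely specifiable but are determined by the wave-map equation $\mathrm{tr}_g \nabla^E \partial f = 0$, which expresses $\partial_t^2 f$ algebraically in terms of the first derivatives and the connection coefficients of $g$ and $h$. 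I would substitute the wave-map relation to eliminate $\partial_t^2 f$, and then use formula \eqref{extcurv} relating $\bar{K}_{ij}$ to $\partial_t g_{ij}$, the lapse, shift, and $\bar{\phi}_t$, to verify that the extrinsic curvature and the momentum $\pi$ computed from $\underline{g}, \underline{\phi}$ reproduce exactly the geometric data $(\bar{K}, \pi)$ regardless of the choice of $(a,b)$.

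The main obstacle is the bookkeeping in this last computation: tracking how the wave-map equation constrains $\partial_t^2 f$ and confirming that, although the coordinate quantities $\partial_t \underline{g}_{ij}|_{t=0}$ and $\partial_t \underline{\phi}|_{t=0}$ \emph{do} change with $(a,b)$, their \emph{geometric} content ($K$ and $\pi$) does not. The conceptual resolution is that the freedom in $(a,b)$ is exactly the gauge freedom in the lapse and shift: different $(a,b)$ produce different coordinate representatives of the \emph{same} geometric initial data set. I would therefore conclude that for any preassigned target values of the lapse and shift on the initial slice, there is a unique choice of $(a,b)$ (with $a>0$) realizing them, while the invariant data $(\bar{g},\bar{\phi},K,\pi)$ is preserved. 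This is precisely what is needed to set up the geometric-uniqueness argument that follows, where two solutions with the same geometric data are pulled back to a common gauge and then shown to coincide by the uniqueness of solutions to the reduced hyperbolic system.
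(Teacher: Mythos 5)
Your proposal is correct in substance and, for the lapse--shift part, is essentially the paper's own argument: the paper likewise uses $f(\cdot,0)=\mathrm{Id}$ to fix $\underline{g}_{ij}|_{t=0}=\bar{g}_{ij}$ and $\underline{\phi}|_{t=0}=\bar{\phi}$, and then solves $f_{*}\partial_t|_{t=0}=\frac{1}{N}(\partial_t-\beta)|_{t=0}$ to read off $a=\frac{1}{N}|_{t=0}$ and $b=-\frac{1}{N}\beta|_{t=0}$, realizing the target values $\underline{N}|_{t=0}=1$, $\underline{\beta}|_{t=0}=0$. Where you genuinely diverge is the treatment of $K$ and $\pi$. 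The paper disposes of this in one line: $f$ is an isometry of Weyl structures between $(V,\underline{g},\underline{\phi})$ and $(V,g,\phi)$ which fixes $M_0$ pointwise, so every geometric object attached to the embedded slice (the Weyl connection, the extrinsic curvature $K$, and the normal derivative $\pi$ of the scalar field) is automatically preserved --- no coordinate computation and no appeal to the wave-map equation is required. Your route is a direct verification in coordinates; it buys explicitness but at the cost of bookkeeping that the invariance argument makes unnecessary.

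Two points in your computational route need repair. First, a misattribution: the second time derivatives $\partial^2_t f|_{t=0}$, and hence the wave-map equation, never enter the verification of $K$ and $\pi$. Differentiating $\underline{g}_{ij}=\partial_i f^A\,\partial_j f^B\,(g_{AB}\circ f)$ in $t$ produces $\partial_t\partial_i f^A=\partial_i(\partial_t f^A)$, i.e.\ spatial derivatives of $(a,b)$, not $\partial^2_t f^A$; similarly $\partial_t\underline{\phi}|_{t=0}=a\,\partial_t\phi+b^k\partial_k\phi$. So $\partial_t\underline{g}_{ij}|_{t=0}$, $\partial_t\underline{\phi}|_{t=0}$, and therefore $\underline{K}$ and $\underline{\pi}$, depend only on first derivatives of $f$ at the slice. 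Second, the place where $\partial^2_t f|_{t=0}$ \emph{does} enter is the remaining data $\partial_t\underline{g}_{00}|_{t=0}$ and $\partial_t\underline{g}_{0i}|_{t=0}$, and this is precisely the piece your proposal leaves open: the lemma asserts that \emph{all} first time-derivatives take preassigned values depending only on the geometric data, and the subsequent uniqueness theorem needs this so that both pulled-back solutions carry identical full initial data for the reduced system. To close it, one must argue that the wave-map condition forces the wave gauge condition $F|_{t=0}=0$ for the pulled-back fields, whose algebraic solution for $\partial_t\underline{g}_{0\alpha}|_{t=0}$ is exactly the one used in the existence theorem and hence depends only on the geometric data; this is how the paper finishes its proof.
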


\begin{proof}
Notice that some of the initial data for $(\underline{g},\underline{\phi})$ are already specified by the geometric initial data set $(\bar{g},\bar{\phi},K,\pi)$. For instance, recalling that two Weyl manifolds $(M_1,g_1,\phi_1)$ and $(M_2,g_2,\phi_2)$ are called isometric if there is a diffeomorphism $f:M_1\mapsto M_2$ such that $g_1=f^{*}g_2$ and $\phi_1=f^{*}\phi_2$, then, by means of the wave map diffeomorphism we are defining an isometry between the two Weyl manifolds $(V,g,\phi)$ and $(V,\underline{g},\underline{\phi})$, thus all the intrinsically geometric objects will be unaffected. That is, the Weyl connection of the two structures is the same, and so is the extrinsic curvature of $M_0$ (just think of the diffeomorphism as generating coordinate transformations). Thus, we already have the initial data $\underline{\phi}(\cdot,0)=\phi(\cdot,0)$ and $\underline{K}=K$. Now, the diffeomorphism gives us the freedom to specify the initial values of the lapse $\underline{N}$ and shift $\underline{\beta}$. In particular we are interested in setting $\underline{N}(\cdot,0)=1$ and $\underline{\beta}(\cdot,0)=0$. Taking into account (\ref{ID}), we get the following relations:
\begin{align*}
\underline{g}(\partial_t,\partial_t)|_{t=0}&=g(f_{*}\partial_t,f_{*}\partial_t)|_{t=0},\\
\underline{g}(\partial_t,\partial_i)|_{t=0}&=g(f_{*}\partial_t,f_{*}\partial_i)|_{t=0}=g(f_{*}\partial_t,\partial_i)|_{t=0},\\
\underline{g}(\partial_i,\partial_j)|_{t=0}&=g(f_{*}\partial_i,f_{*}\partial_j)|_{t=0}=\bar{g}(\partial_i,\partial_j).
\end{align*}
In order to satisfy our prescribed conditions on the initial data for the shift and lapse, we get that
\begin{align*}
f_{*}\partial_t(\cdot,0)=\frac{1}{N}e_0|_{t=0}=\frac{1}{N}(\partial_t-\beta)|_{t=0},
\end{align*}
from which we deduce that $\partial_tf(\cdot,0)=(\frac{1}{N}|_{t=0},-\frac{1}{N}\beta|_{t=0})$, which implies the choices $a=\frac{1}{N}|_{t=0}$ and $b=-\frac{1}{N}\beta|_{t=0}$. Now, by means of the relation between $\partial_{t}\underline{g}_{ij}$ and $\underline{K}_{ij}$, we can pick the initial data $\partial_{t}\underline{g}_{ij}|_{t=0}$. Similarly, once we have set $(a,b)$ in terms of the initial data for $\underline{N}$ and $\underline{\beta}$, we get the initial data for $\partial_t\underline{\phi}|_{t=0}$. Finally, the initial data for $\partial_t\underline{g}_{0\alpha}|_{t=0}$ is again chosen by solving the initial wave gauge condition for the vector field $F$.
\end{proof}

\begin{thm}
[\textbf{Short-time geometric uniqueness}]\label{geouniq} Let $(V_{1}%
,[g_{1},\phi_{1}])$ and $(V_{2},[g_{2},\phi_{2}])$ be two smooth solutions of
the Cauchy problem for the vacuum WGSTT satisfying the initial data
$(M,[\bar{g},\bar{\phi},\bar{K},\bar{\pi}])$. There exists an isometry from
$(U_{1},[g_{1},\phi_{1}])$ onto $(U_{2},[g_{2},\phi_{2}])$, where $U_{1}$ and
$U_{2}$ are neighbourhoods of $M_{0}$, respectively in $V_{1}$ and $V_{2}$.
\end{thm}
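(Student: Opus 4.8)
The plan is to reduce the geometric uniqueness statement to the uniqueness theorem for the reduced hyperquasilinear system already established earlier, by using the wave map of the previous lemma to bring both solutions into a common gauge. The key observation is that the problematic gauge freedom here is twofold: the choice of initial lapse and shift (present already in GR), and the Weyl freedom; the latter has already been handled by our earlier well-posedness analysis, so we may work with fixed representatives in each class and concentrate on the lapse-shift gauge.

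First I would fix, for $j=1,2$, a representative $(g_{j},\phi_{j})$ of the class $(V_{j},[g_{j},\phi_{j}])$. By the previous two lemmas there exists, for each $j$, a wave map $f_{j}$ from a strip $W_{\tau}=M\times(-\tau,\tau)$ onto a neighbourhood of $M_{0}$ in $V_{j}$ with the prescribed initial values $f_{j}(\cdot,0)=\mathrm{Id}$ and $\partial_{t}f_{j}(\cdot,0)=(a,b)$, and the second lemma tells us precisely how to choose $(a,b)=(\tfrac{1}{N}|_{t=0},-\tfrac{1}{N}\beta|_{t=0})$ so that the pulled-back data $\underline{g}_{j}\doteq f_{j}^{*}g_{j}$ and $\underline{\phi}_{j}\doteq f_{j}^{*}\phi_{j}$ have normalized initial lapse $\underline{N}_{j}(\cdot,0)=1$ and shift $\underline{\beta}_{j}(\cdot,0)=0$. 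Because the wave map is an isometry of Weyl structures, the intrinsic data $\underline{\phi}_{j}(\cdot,0)$, $\underline{K}_{j}$, $\bar{g}$ are unchanged and hence agree for $j=1,2$, being determined solely by the common geometric initial data $(M,[\bar{g},\bar{\phi},\bar{K},\bar{\pi}])$. Then I would invoke the wave-gauge construction of Theorem \ref{existence1}: solving $F(\cdot,0)=0$ for $\partial_{t}\underline{g}_{0\alpha}|_{t=0}$ fixes the remaining initial data, and it does so identically for both $j=1$ and $j=2$ since the inputs are the same. Thus the two pulled-back solutions $(\underline{g}_{1},\underline{\phi}_{1})$ and $(\underline{g}_{2},\underline{\phi}_{2})$ satisfy the same reduced system (\ref{reducedsys1})-(\ref{reducedsys2}) with identical initial data on $M_{0}$.

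Next I would apply the uniqueness part of the local existence-uniqueness theorem for hyperquasilinear wave equations (quoted from \cite{C-B1}) to conclude that $\underline{g}_{1}=\underline{g}_{2}$ and $\underline{\phi}_{1}=\underline{\phi}_{2}$ on a common strip $W_{\tau'}$ for some $\tau'>0$. Unwinding the pull-backs, this yields $f_{1}^{*}g_{1}=f_{2}^{*}g_{2}$ and $f_{1}^{*}\phi_{1}=f_{2}^{*}\phi_{2}$, so that the composition $\Psi\doteq f_{2}\circ f_{1}^{-1}$ is a diffeomorphism from a neighbourhood $U_{1}$ of $M_{0}$ in $V_{1}$ onto a neighbourhood $U_{2}$ of $M_{0}$ in $V_{2}$ satisfying $\Psi^{*}g_{2}=g_{1}$ and $\Psi^{*}\phi_{2}=\phi_{1}$. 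By the definition of isometry of Weyl manifolds recalled in the previous lemma, $\Psi$ is the desired isometry between representatives, and since isometries commute with Weyl transformations it descends to an isometry of the full equivalence classes $(U_{1},[g_{1},\phi_{1}])$ and $(U_{2},[g_{2},\phi_{2}])$.

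I expect the main obstacle to be the careful verification that the wave-map pull-back is compatible with the wave-gauge condition $F=0$ in such a way that both solutions genuinely land in the \emph{same} gauge with \emph{identical} initial data; in particular one must check that the Weyl-invariance of $F$ and the frame-independence of the $(n+1)$-splitting (noted earlier, since $\beta$ does not depend on the representative) are used consistently, and that the matching of $\partial_{t}\underline{g}_{0\alpha}|_{t=0}$ obtained from $F(\cdot,0)=0$ is indeed the same for both solutions rather than merely structurally analogous. A secondary subtlety is that the wave map constructed in the preceding lemma depends on the target solution $(V_{j},g_{j})$, so the two maps $f_{1},f_{2}$ are a priori different; one must ensure that their prescribed common initial values $(a,b)$ make the resulting pulled-back initial data coincide, which is exactly what the normalization $\underline{N}_{j}(\cdot,0)=1$, $\underline{\beta}_{j}(\cdot,0)=0$ guarantees. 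Once these compatibility points are secured, the remainder follows from standard PDE uniqueness and is routine.
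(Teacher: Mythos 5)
Your proposal is correct and follows essentially the same route as the paper's own proof: use the two wave-map lemmas to pull both representative solutions into the common wave gauge with identical initial data (normalized lapse and shift, intrinsic data fixed by the geometric data, remaining data fixed by $F(\cdot,0)=0$), invoke uniqueness for the reduced system (\ref{reducedsys1})--(\ref{reducedsys2}), and compose the two diffeomorphisms to obtain the isometry, which then descends to the equivalence classes. The only differences are notational (your pull-back by $f_{j}$ versus the paper's pull-back by $f_{j}^{-1}$, hence $f_{2}\circ f_{1}^{-1}$ versus $f_{2}^{-1}\circ f_{1}$), and your explicit flagging of the point that the wave-map construction is what guarantees the gauge condition holds for the transformed solutions --- a step the paper disposes of with the brief remark that these are tensor relations.
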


\begin{proof}
Consider the representatives of the solution $(V_1,[g_1,\phi_1])$ and $(V_1,[g_2,\phi_2])$, given by $(V_1,g_1,\phi_1)$ and $(V_1,g_2,\phi_2)$, which satisfy the system (\ref{fieldeqs2}) and take the initial data $(M,\bar{g},\bar{\phi},\bar{K},\bar{\pi})$. Using the above lemmas, we know that there are diffeomorphisms $f_1$ and $f_2$ from neighbourhoods $U_1$ and $U_2$ of $M$ in $V_1$ and $V_2$, respectively, onto some neighbourhoods of $M$, such that $({f^{-1}_1}^{*}g_1,{f^{-1}_1}^{*}\phi_1)$ and $({f^{-1}_2}^{*}g_2,{f^{-1}_2}^{*}\phi_2)$ take the same initial data on $M_0$. Also, both $({f^{-1}_i}^{*}g_i,{f^{-1}_i}^{*}\phi_i)$, $i=1,2$, satisfy the reduced system (\ref{reducedsys1})-(\ref{reducedsys2}) and the gauge condition, since these are tensor relations. Thus, by uniqueness of solutions for the reduced system, both solutions agree. Thus, the composition $f_2^{-1}\circ f_1$ defines an isometry from $(U_1,g_1,\phi_1)$ onto $(U_2,g_2,\phi_2)$, where $U_1$ and $U_2$ are neighbourhoods of $M_0$ in $V_1$ and $V_2$ respectively. Thus, the equivalence classes $(U_1,[g_1,\phi_1])$ and $(U_2,[g_2,\phi_2])$ are isometric.
\end{proof}

With this theorem we end the local in time study of the initial value
formulation of the WGSTT without non-geometric (matter) sources. The final
result is that the Cauchy problem for such models is well-posed within the
physical interpretation proposed for those theories. That is, not only the PDE
systems have a well-posed initial value formulation, but the uniqueness holds
within the physical equivalence class. In this sense, when we speak of the
physical equivalence class $(V,[g,\phi])$, we should consider as equivalent
not only elements linked by the action of Weyl transformations, but also by
the action of the diffeomorphism group of $V$, which act on tensor fields via pull-back.

\section{Discussion}

We would like to conclude making a few remarks about the results presented
above. First, we have been able to show that the WGSTT presented in the first
sections, have a well-posed initial value formulation for initial data
satisfying a system of constraint equations. Furthermore, it is an interesting
fact that the Weyl structure results as the evolution of initial data, where
the space-time structure, in each frame, is generated as a solution of a
system of non-linear wave equations in the space-time metric. This shows that
in each frame we have that the speed of propagation of the gravitational
interaction, now described by the whole Weyl structure, is determined by the
null cones (of any) of the space-time metrics in the conformal class, which is
an invariant property in Weyl structure.
%\footnote{When we say that we have a system of wave equations in the space-time metric, we mean that the \textit{principal part} (the top order terms) has the structure of a wave operator in the space-time metric.}

We should stress that the results presented above should be interpreted
according to \textbf{Theorem \ref{geouniq}}, where we have shown the
short-time geometric uniqueness for the initial value problem, taking into
account both the invariance under the action of Weyl transformations and
isometries. Again, it is worth stressing the relation between this problem and
the initial value formulation for standard scalar-tensor theories. In that
context, well-posedness has been studied in each frame independently, without
analysing the (possible) equivalence between space-times evolving from
equivalent geometric initial data. Such result should be of interest for
someone willing to adopt Dicke's interpretation regarding the equivalence of
both frames \cite{Dicke:1961gz}. In this line, by considering a geometric
interpretation for such theories, where by means of Weyl integrable structures
we can make precise mathematical sense of the physical equivalence of
different frames, we have not only shown well-posedness in each frame
independently, but also shown that equivalent initial data sets evolve into
equivalent space-time structures. Furthermore, we have provided a global in
space result, which, as far as we aware, is an issue which had not been
addressed outside the Riemannian (Einstein) frame.

We leave as a future research perspective the study of the evolution problem
in the presence of matter fields. Furthermore, it would be essential to
analyse the well-posedness of the system of constraint equations, and all the
related problems to this issue.

\section*{Acknowledgements}

\noindent The authors would like to thank CNPq and CAPES for financial
support. R. A. and C. R. would also like to thank CLAF for partial financial support.

\bigskip
\noindent
We thank the referee for valuable comments and suggestions.
\bigskip

\newpage

\end{document}